
\documentclass[10pt,reqno]{amsart}
\usepackage{amsmath,latexsym,amssymb,ifpdf}
\usepackage{color,graphicx}
\graphicspath{{./}}

     \makeatletter
     \def\section{\@startsection{section}{1}%
     \z@{.7\linespacing\@plus\linespacing}{.5\linespacing}%
     {\bfseries
     \centering
     }}
     \def\@secnumfont{\bfseries}
     \makeatother
\setlength{\textheight}{19.5 cm} \setlength{\textwidth}{12.5 cm}
\newtheorem{theorem}{Theorem}[section]
\newtheorem{lemma}[theorem]{Lemma}
\newtheorem{proposition}[theorem]{Proposition}
\newtheorem{corollary}[theorem]{Corollary}
\theoremstyle{definition}

\theoremstyle{remark}
\newtheorem{remark}[theorem]{Remark}
\numberwithin{equation}{section} \setcounter{page}{1}

\renewcommand{\paragraph}[1]{{\bf #1.}}
\definecolor{myred}{rgb}{0.8,0,0}  

{\vskip\baselineskip\noindent\textbf{Proof of {#1}:}}%
{\hspace*{.1pt}\hspace*{\fill}\BOX\vskip\baselineskip}

\def \R{\mathbb{R}}               
\def \N{\mathbb{N}}               
\def \1{{\bf 1}}                
\def \0{{\bf 0}}

\def\eps{\varepsilon}

\def\var{\text{var}}
\def\cov{\text{cov}}
\def\varmu{\nu}
\def\expert{Z}
\def\varexp{\Gamma}
\def\varexpbound{\overline\Gamma}


\def \1{{\bf 1}}                
\def \0{{\bf 0}}

\begin{document}

\title[Expert Opinions and Logarithmic Utility Maximization]{Expert Opinions and Logarithmic Utility Maximization in a Market with Gaussian Drift}

\author[A.~Gabih]{Abdelali Gabih}
\address{Abdelali Gabih, Universit\'{e} Cadi Ayyad, ENSA Marrakech, Laboratoire OSCARS, Boulevard
Abdelkarim Khattabi Gu\'{e}liz BP 575, 40000 Marrakech, Morocco}
\email{a.gabih@uca.ma}

\author[H.~Kondakji]{Hakam Kondakji}
\address{Hakam Kondakji, Mathematical Institute, Brandenburg
University of Technology Cottbus -- Senftenberg ,  Postfach 101344, D-03013 Cottbus,Germany}
\email{Hakam.Kondakji@tu-cottbus.de}

\author[J.~Sass]{J\"orn Sass}
\address{J\"orn Sass, Department of Mathematics, University of Kaiserslautern,
P.O.Box 3049, 67653 Kaiserslautern, Germany, Germany}
\email{sass@mathematik.uni-kl.de}

\author[R.~Wunderlich]{Ralf Wunderlich}
\address{Ralf Wunderlich, Mathematical Institute, Brandenburg
University of Technology Cottbus -- Senftenberg , Postfach 101344, D-03013 Cottbus,
Germany} \email{ralf.wunderlich@tu-cottbus.de }

\date{ \today}

\subjclass[2010] {Primary 91G10;  Secondary 93E11, 93E20}

\keywords{Portfolio optimization, utility maximization, expert opinions, Kalman filter, partial information}

\begin{abstract}
This paper investigates optimal portfolio strategies in a financial market
where the drift of the stock returns is driven by an unobserved Gaussian mean reverting process. Information
on this process is obtained from observing stock returns and
expert opinions. The latter provide at discrete time
points an unbiased estimate of the current state of the drift. Nevertheless, the drift can only be observed partially and the best estimate is given by the conditional expectation given the available information, i.e., by the filter. We provide the filter equations in the model with expert opinion and derive in detail properties of the conditional variance. For an investor who maximizes expected logarithmic utility of his portfolio, we derive the optimal strategy explicitly in different settings for the available information. The optimal expected utility, the value function of the control problem, depends on the conditional variance. The bounds and asymptotic results for the conditional variances are used to derive bounds and asymptotic properties for the value functions. The results are illustrated with numerical examples.
\end{abstract}

\maketitle


\section{Introduction}

We consider an investor who wants to maximize expected logarithmic utility of terminal wealth obtained by trading in a financial market consisting of one riskless asset and one stock. Stock returns satisfy
\[
R_t = \int_0^t \mu_s\,ds + \sigma\, dW_s,
\]
where $W$ is a Brownian motion, the volatility $\sigma>0$ is constant, but the drift $\mu$ is some stochastic process independent of $W$. Thus the drift is hidden and has to be estimated from the observed stock returns.
The best estimate in a mean-square sense is the filter. While under suitable integrability assumptions we can get quite far in solving the utility maximization problem, see Bj\"ork, Davis and Land\'en \cite{Bjoerk et al (2010)} and  Lakner \cite{Lakner (1995)}, we need models which allow for finite dimensional filters to solve the problem completely including the computation of an optimal policy. Therefore, in the literature the drift process is either modeled as Ornstein-Uhlenbeck process (OUP) or as a  continuous time Markov chain (CTMC). In both models  finite-dimensional filters are well known, the Kalman and Wonham filters, respectively, see e.g.\ Elliott, Aggoun and Moore \cite{Elliott et al. (1994)}, Liptser and Shiryaev \cite{Liptser-Shiryaev}. In these two models the utility maximization problem is solved, see Brendle \cite{Brendle (2006)}, Lakner \cite{Lakner (1998)}, Putsch\"ogl and Sass \cite{Putschoegl and Sass (2008)} and Honda \cite{Honda (2003)}, Rieder and B\"auerle \cite{Rieder_Baeuerle2005}, Sass and Haussmann \cite{Sass and Haussmann (2004)}, respectively.

However, to improve the estimate, an investor may rely on expert opinions. These provide a noisy estimate of the current state of the drift. For unbiased estimates, this reduces the variance of the filter. The better estimate then improves expected utility. This can be seen as a continuous time version of the static Black-Litterman approach which combines an estimate of the asset return vector with expert opinions on the performance of the assets, see Black and Litterman \cite{Black_Litterman (1992)}. For a comparison with other Bayesian and robust Bayesian methods see Sch\"ottle, Werner and Zagst \cite{Schoettle et al. (2010)}.

Frey, Gabih and Wunderlich \cite{Frey et al. (2012), Frey-Gabih-Wunderlich-2014} solve the case of an underlying CTMC.  As an approximation, also expert opinions arriving continuously in time can be introduced. This allows for more explicit solutions for the portfolio optimization problem. Davis and LLeo \cite{Davis and Lleo (2013)} consider this approach for an underlying OUP, Sass, Seifried and Wunderlich \cite{Sass et al (2014)} address the CTMC.

In this paper we look at the remaining case, an underlying OUP with time-discrete expert opinions. Due to the combination of continuous time-observations (stock returns) and discrete-time expert opinions, optimal portfolio policies are quite involved. We expect that for power utility they can be derived along the lines of \cite{Frey et al. (2012), Frey-Gabih-Wunderlich-2014} using a stochastic control approach with an additional policy-dependent change of measure, cf.\ Nagai and Peng \cite{Nagai and Peng (2002)}, and working with viscosity solutions. However, since our focus lies on explicit results and bounds on the improvement by expert opinions for different information regimes, we shall consider only logarithmic utility here. Explicit results for other utility functions are up to future research. On the other hand, an extension for logarithmic utility to the multivariate case, i.e., to markets with more than one risky asset, is straightforward. Filtering results and optimal policies can be derived analogously. But closed form solutions are no longer available for the conditional variances which then have to be computed numerically. Convergence results as in Section \ref{cond_variance} would be more difficult to obtain.

The paper is organized as follows. In Section \ref{market_model} we define the model for an OUP drift process and specify our concept of  expert opinions. We introduce different settings for the available information which arises from observing the stock returns only (classical partial information), from expert opinions only and from the combination of stock returns and expert opinions. As reference we also consider full information. In Section \ref{partial_info} we state the classical Kalman filter for pure return observations and derive in the cases with expert opinion the filtering equations. In Section \ref{cond_variance} we analyze the conditional variance in detail: In addition to staightforward bounds and monotonicity assumptions, Proposition \ref{prop_cond_var_asymN} provides the limits for an increasing number of i.i.d.\ expert opinions for a finite time horizon and Proposition \ref{prop_cond_var_asym_t}  provides tight asymptotic bounds for the conditional variance for regularly arriving expert opinions for an infinte time horizon. These properties and bounds are important since the optimal value is a function of the conditional variance. Our main result is Theorem \ref{opt_value_theorem} which provides for logarithmic utility explicit solutions in all four information settings. In the remainder of Section \ref{optimization} we compare  the optimal expected utilities (value functions) for the different cases. In Section \ref{numerics} we provide extensive  simulations and numerical computations to illustrate our theoretical results.

Summarizing, our contributions lie in (i) finding filtering equations in the settings with expert opinion, (ii) solving the $\log$-utility maximization problem with closed form solutions for optimal policies and values and (iii) deriving limits and bounds for the conditional variance and using these to compare different information settings.

\section{Financial Market Model}

\label{market_model}  For a  fixed date $T>0$ representing the
investment horizon, we work on a filtered probability space
$(\Omega,\mathcal{G},\mathbb{G},P)$, with filtration
$\mathbb{G}=(\mathcal {G}_t)_{t \in [0,T]}$ satisfying the usual
conditions. All processes are assumed to be $\mathbb{G}$-adapted.

\paragraph{Price dynamics}
 We consider a market model  for one risk-free bond with prices
$S^0_t=1$ and  one risky security with prices $S_t$ given by
\begin{eqnarray}
\label{stockmodel} dS_t=S_t\,\Big(\mu_t dt+\sigma dW_t\Big).
\end{eqnarray}
The volatility $\sigma$ is assumed to be a positive constant and $W$
is an one-dimensional $\mathbb{G}$-adapted Brownian motion.  The dynamics of the
drift process $\mu$ are given by the stochastic differential equation (SDE)
\begin{eqnarray}
\label{drift} d\mu_t=\alpha(\delta-\mu_t) dt+\beta dB_t,
\end{eqnarray}
where
$\alpha, \beta >0$ and $\delta \in \R $ are constants  and $B$ is a Brownian motion independent of $W$. Here, $\delta$ is the mean-reversion level, $\alpha$ the mean-reversion speed and $\beta$ describes the volatility of $\mu$.
The initial value $\mu_0$ is assumed to be a normally distributed
random variable independent of $B$ and $W$ with mean $m_0\in \R$ and variance $\varmu_0\ge 0$.
It is well-known that  SDE \eqref{drift} has the closed-form solution
\begin{eqnarray}
\label{mu_explicit}
\mu_t = \delta +e^{-\alpha t}\Big[(\mu_0 -\delta) + \beta  \int_0^t e^{\alpha s} dB_s\Big],\quad t>0.
\end{eqnarray}
This is a  Gaussian process and  known  as  Ornstein-Uhlenbeck
process. It has moments
\begin{align}
\label{mu_mean}
 m_t := E[\mu_t] &=   ~~~~\delta ~~~~~+~~e^{-\alpha t}(m_0 -\delta) &&\text{(mean)}\\
\label{mu_var}
 \varmu_t := \var[\mu_t] &=  \frac{\beta^2}{2\alpha} + e^{-2\alpha t}\Big(\nu_0- \frac{\beta^2}{2\alpha}\Big)
&& \text{(variance)}\\
\nonumber
\cov[\mu_s,\mu_{t}]&=\frac{\beta^2}{2\alpha} e^{-\alpha |t-s|} + e^{-\alpha (t+s)}\Big(\nu_0- \frac{\beta^2}{2\alpha}\Big) && \text{(covariance function)}
\end{align}
for $s,t\ge 0$.
It can be seen, that mean and variance approach exponentially fast  the limits $\delta$ and $\frac{\beta^2}{2\alpha}$, respectively, i.e.~asymptotically for $t\to \infty$ the drift $\mu_t$ has a  $\mathcal{N}(\delta,\frac{\beta^2}{2\alpha})$ distribution which is the stationary distribution.
Starting with the stationary distribution leads to a (strict-sense) stationary drift process $\mu$ with mean $\delta$ and correlation function $\cov[\mu_t,\mu_{t+\tau}]=\frac{\beta^2}{2\alpha} e^{-\alpha |\tau|} $ for $t,t+\tau\ge 0$.

We define the return process $R$  associated with the price process $S$
by $dR_t=dS_t/S_t$.   Note that $R$ satisfies $dR_t = \mu_t dt +
\sigma dW_t $ and  $R_t = \log S_t +\frac{\sigma^2}{2} t$. So we have the equality
$\mathbb{G}^R = \mathbb{G}^{\log S} = \mathbb{G}^S\,.$ This is
useful, since it allows to work with $R$ instead of $S$ in the
filtering part.

\paragraph{Investor information and expert opinions}
An investor cannot observe the drift process $\mu$ directly.
He  has  noisy observations of
the hidden process  $\mu$ at his disposal. More precisely we assume
that the investor  observes the return process $R$ and that he
receives at $N$ discrete  deterministic points in time $t_0,\ldots,t_{N-1}$ with $0=t_0<\ldots<t_{N-1} <T$ and  $N\in \N$ noisy
signals about the current state of $\mu$. These signals or ''views'' are
interpreted as expert opinions and modelled by Gaussian random variables of the form $\expert_k= \mu_{t_k} + \sqrt{\varexp_k} \eps_k$ with i.i.d.~random variables  $\eps_0,\ldots,\eps_{N-1}  \sim\mathcal{N}(0,1)$ independent of the Brownian motions $B$ and $W$. So we assume that the expert's views are unbiased, i.e., in expectation they coincide with the current (and unknown) value of the drift. The variance $\varexp_k$ is a measure for the reliability of the expert: the larger $\varexp_k$ the less reliable is the expert. Note that we always assume that an investor knows the model parameters, in particular the
distribution $\mathcal{N}(m_0,\nu_0)$ of  the initial value $\mu_0$.
Setting  $\varmu_0=0$ we can model   a known  (deterministic)  initial value $m_0$ of  the drift.

The information available to an investor can be described by the \textit{investor filtration} $\mathbb{F}^H=(\mathcal{F}^H_t)_{t\in[0,T]}$ for which we consider four cases $H\in\{R,E,C,F\}$, where
\[\begin{array}{rcll}
\mathbb{F}^R&=& (\mathcal {F}_t^R)_{t \in [0,T]} & \text{with }\mathcal {F}_t^R~\text{generated by }~
\{R_s, s\le t\}, \\[0.5ex]
\mathbb{F}^E&=& (\mathcal {F}_t^E)_{t \in [0,T]} & \text{with }\mathcal {F}_t^E~\text{generated by }~
\{Z_k, t_k\le t\}, \\[0.5ex]
\mathbb{F}^C&=& (\mathcal {F}_t^C)_{t \in [0,T]} & \text{with }\mathcal {F}_t^C~\text{generated by }~
\{R_s, s\le t,\,Z_k, t_k\le t\}, \\[0.5ex]
\mathbb{F}^F&=& \mathbb{G},
\end{array}
\]
and where we assume that the $\sigma$-algebras $\mathcal{F}_t^H$, $H\in\{R,E,C\}$ are augmented by the null sets $\mathcal{N}$ of $P$, e.g., $\mathcal{F}_t^R = \sigma(\{R_s,\, 0 \le s\le t\} \cup \mathcal{N})$.
Note that $\mathcal {F}_t^C=\mathcal {F}_t^R \vee \mathcal {F}_t^E$.
$\mathbb{F}^R$ and $\mathbb{F}^E$ correspond to an investor who observes only returns or expert opinions, respectively. $\mathbb{F}^C$ describes the information arising from the combination of returns and expert opinions. Finally, $\mathbb{F}^F$ describes an investor who has full information on the drift process $\mu$.
For stochastic drift full information is unrealistic, but we use results obtained for $\mathbb{F}^F$ as  reference points for the corresponding results in the other cases, e.g. when defining the efficiency in Section \ref{numerics}.

\section{Partial Information and Filtering}
\label{partial_info}

The filter for the drift  $\mu_t$ is the projection on the $\mathcal{F}_t^R$-measurable random variables. It is given by the conditional expectation $\widehat{\mu}_t=E[\mu_t|\mathcal{F}^H_t]$ and is optimal estimate in the mean-square sense. In the following we discuss the four cases $\mathbb{F}^H$,  $H\in\{R, E, C, F\}$.
\\[1ex]

\paragraph{Return observations only ($H=R$)}
If the investor only observes the returns and has no access to the additional expert opinions,  his information is given by $\mathbb{F}^R$. Then the drift process $\mu$ and return process $R$ are jointly Gaussian and hence the conditional distribution of $\mu$ given $R$  is
completely described by the conditional mean
$\widehat{\mu}_t^R:=E[\mu_t|\mathcal{F}^R_t]$ and the conditional
variance $\gamma_t^R:=E[(\mu_t-\widehat{\mu}^R_t)^2|\mathcal{F}^R_t]$. The dynamics of $\widehat{\mu}^R$ and $\gamma^R$ are given by the well-known Kalman filter, see e.g.~Liptser and Shiryaev  \cite{Liptser-Shiryaev},
which consists of the following SDE for $\widehat{\mu}_t^R$
\begin{eqnarray}
\label{filter_R}
d\widehat{\mu}^R_t = \big(\alpha(\delta-\widehat{\mu}^R_t)- \sigma^{-2}\gamma_t^R\;\widehat{\mu}^R_t\big) dt+\sigma^{-2}\gamma_t^R\;dR_t, \quad \widehat{\mu}^R_0=m_0,
\end{eqnarray}
and a deterministic ODE for the conditional variance $\gamma_t^R$
\begin{eqnarray}
\label{cond_var_R}
\frac{d}{dt}\gamma^R_t&=&-\sigma^{-2}(\gamma^R_t)^{{^2}}-2\alpha\gamma^R_t+\beta^2,
\quad \gamma^R_0=\varmu_0,
\end{eqnarray}
hence $\gamma_t^R$   is deterministic. The above ODE is known as Ricatti Equation and has for initial value $\gamma^R_0=\varmu_0\ge 0$ the unique non-negative solution
(see e.g.~Lakner \cite{Lakner (1998)})
\begin{eqnarray}
\label{solution_Riccatti}
\gamma^R_t&=&-\alpha\sigma^2+C_0\;\frac{C_1+C_2\,e^{-2C_0\sigma^{-2}t}}{C_1-C_2\,e^{-2C_0\sigma^{-2}t}}\\[1ex]
\nonumber
\end{eqnarray}
with $
C_0=\sigma\sqrt{\sigma^2\alpha^2+\beta^2},~
C_{1}=\nu_0+\alpha\sigma^2 + C_0$, $C_{2}=\nu_0+\alpha\sigma^2 - C_0$.\\[1ex]

\paragraph{Only expert opinions ($H=E$)}
If the investor's estimate on the drift is based only on  expert opinions  arriving at discrete points  $t_1,\ldots, t_N$ we have information $\mathbb{F}^E$.
For the conditional mean $\widehat \mu_t^E$ and the conditional variance $\gamma_t^E$ we have the following result.
\begin{lemma}\label{filter_E}\ \\[-2ex]
\begin{enumerate}
\item[(i)]
Between two information dates $t_k$ and $t_{k+1}$ it holds for $ t\in [t_k,t_{k+1})$, $k=0,\ldots,N-1$ that $\widehat \mu_t^E$ is  Gaussian  with
\begin{eqnarray}
\label{filter_E0}
\widehat \mu_t^E & =& e^{-\alpha(t-t_k)} \widehat \mu_{t_k}^E+\left(1- e^{-\alpha(t-t_k)}\right) \delta,\\
\gamma_t^E & = & e^{-2 \alpha(t-t_k)} \gamma_{t_k}^E  +\left (1-e^{-\alpha(t-t_k)}\right) \frac{\beta^2}{2\alpha}.
\label{cond_var_E0}
\end{eqnarray}
\item[(ii)]
At the information dates $t_k$
it holds that $\widehat \mu_{t_k}^E$ is  Gaussian  with
\begin{eqnarray}
\label{filter_E_update}
\widehat  \mu_{t_k}^E & =&  \lambda_k^E \;\widehat \mu_{t_k-}^E + (1- \lambda_k^E) \,Z_k \quad \text{where}\quad \lambda_k^E=\frac{\varexp_k}{\gamma^E_{t_k-}+\varexp_k}
\\[1ex]
\gamma_{t_k}^E & = & \lambda_k^E\,\gamma^E_{t_k-} = \frac{\gamma^E_{t_k-}\;\varexp_{k}}{\gamma^E_{t_k-}+\varexp_k}.
\label{cond_var_E_update}
\end{eqnarray}
For $t_0=0$ we set  $ \widehat \mu_{0-}^E:=m_0$ and $\gamma_{0-}^E:=\nu_0$.
\end{enumerate}
\end{lemma}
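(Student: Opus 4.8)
The plan is to derive the filter by an induction over the expert arrival dates $t_0<t_1<\cdots<t_{N-1}$, alternating a \emph{prediction} step — the free evolution of the Ornstein--Uhlenbeck drift between consecutive dates — and an \emph{update} step — Bayesian conditioning on the freshly arrived expert opinion. The structural fact to be carried through the induction is joint Gaussianity: the vector $(\mu_{t_0},\dots,\mu_{t_k},\expert_0,\dots,\expert_{k-1})$, and more generally $(\mu_t,\expert_0,\dots,\expert_k)$ for $t\in[t_k,t_{k+1})$, is jointly Gaussian. This holds because we start from $\mu_0\sim\mathcal N(m_0,\nu_0)$, the SDE \eqref{drift} is linear (so \eqref{mu_explicit} exhibits $\mu_t$ as an affine functional of $\mu_0$ and the Gaussian noise $B$), and each $\expert_k=\mu_{t_k}+\sqrt{\varexp_k}\,\eps_k$ only adjoins an independent Gaussian. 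Consequently the conditional law of $\mu_t$ given $\mathcal F^E_t$ is Gaussian for every $t$, hence fully described by $\widehat\mu^E_t$ and $\gamma^E_t$, the conditional expectation $\widehat\mu^E_t=E[\mu_t\mid\mathcal F^E_t]$ equals the $L^2$-projection of $\mu_t$ onto the affine span of $1,\expert_0,\dots$, and $\widehat\mu^E_t$ is itself a Gaussian random variable.

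For part (i) I would fix $k$ and $t\in[t_k,t_{k+1})$. Since no expert signal arrives in $(t_k,t)$, $\mathcal F^E_t=\mathcal F^E_{t_k}$ up to null sets. Writing the solution of \eqref{drift} started at $t_k$,
\[
\mu_t=\delta+e^{-\alpha(t-t_k)}(\mu_{t_k}-\delta)+\beta\int_{t_k}^{t}e^{-\alpha(t-s)}\,dB_s,
\]
I would note that the stochastic integral is independent of $\mathcal F^E_{t_k}$: the latter is generated by $\mu_{t_0},\dots,\mu_{t_k}$ and $\eps_0,\dots,\eps_{k-1}$, all measurable with respect to $\sigma(\mu_0,(B_s)_{s\le t_k},\eps_j)$, which is independent of the post-$t_k$ increments of $B$. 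Taking $E[\,\cdot\mid\mathcal F^E_{t_k}]$ then gives \eqref{filter_E0}. Subtracting, $\mu_t-\widehat\mu^E_t=e^{-\alpha(t-t_k)}(\mu_{t_k}-\widehat\mu^E_{t_k})+\beta\int_{t_k}^t e^{-\alpha(t-s)}\,dB_s$; squaring and taking the conditional expectation kills the cross term (the integral is centred and independent of $\mathcal F^E_{t_k}$) and leaves $e^{-2\alpha(t-t_k)}\gamma^E_{t_k}$ plus the deterministic variance $\beta^2\int_{t_k}^t e^{-2\alpha(t-s)}\,ds$ of the Gaussian increment, i.e.\ \eqref{cond_var_E0}.

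For part (ii) I would fix a date $t_k$ and invoke the induction hypothesis: conditionally on $\mathcal F^E_{t_k-}$, $\mu_{t_k}\sim\mathcal N(\widehat\mu^E_{t_k-},\gamma^E_{t_k-})$, while $\eps_k$ is independent of $\mathcal F^E_{t_k-}$ and of $\mu_{t_k}$, and $\mathcal F^E_{t_k}=\mathcal F^E_{t_k-}\vee\sigma(\expert_k)$. Hence $(\mu_{t_k},\expert_k)$ is, conditionally on $\mathcal F^E_{t_k-}$, bivariate Gaussian, and the classical Gaussian conditioning formula (equivalently, the conjugate update of a Gaussian prior by an independent Gaussian observation) gives that $\mu_{t_k}$ given $\mathcal F^E_{t_k}$ is Gaussian with precision $(\gamma^E_{t_k-})^{-1}+\varexp_k^{-1}$ and mean $\gamma^E_{t_k}\big((\gamma^E_{t_k-})^{-1}\widehat\mu^E_{t_k-}+\varexp_k^{-1}\expert_k\big)$. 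Rearranging with $\lambda^E_k=\varexp_k/(\gamma^E_{t_k-}+\varexp_k)$ yields exactly \eqref{filter_E_update} and \eqref{cond_var_E_update}. The induction starts at $t_0=0$, where $\mathcal F^E_{0-}$ is trivial and $\mu_0\sim\mathcal N(m_0,\nu_0)$, consistent with the convention $\widehat\mu^E_{0-}=m_0$, $\gamma^E_{0-}=\nu_0$.

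The step I would be most careful with is not any single computation but the bookkeeping that keeps the pair (state, observation $\sigma$-field) jointly Gaussian through the alternation of prediction and update, and that preserves the conditional-independence structure: one must argue inductively that the post-$t_k$ Brownian increments remain independent of all observations accumulated up to $t_k$ — this is precisely where "no expert opinion arrives strictly between grid points" enters — and that conditioning a jointly Gaussian vector never leaves the Gaussian class, so that tracking the two scalars $\widehat\mu^E_t,\gamma^E_t$ genuinely captures the full conditional distribution. Once this framework is set, both parts reduce to the short Gaussian computations above.
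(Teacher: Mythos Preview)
Your proposal is correct and follows essentially the same approach as the paper: both arguments use that $\mathcal F^E_t=\mathcal F^E_{t_k}$ on $[t_k,t_{k+1})$, write the explicit Ornstein--Uhlenbeck solution started at $t_k$, and exploit independence of the post-$t_k$ Brownian increments to obtain the prediction formulas, then treat the update at $t_k$ as a Gaussian/Bayesian conditioning on the new observation $\expert_k$. The only difference is expository: you spell out the joint-Gaussianity induction and the conjugate-prior computation explicitly, whereas the paper dispatches the update step by citing the degenerate discrete-time Kalman update (Elliott--Aggoun--Moore) or the Bayesian formula in Shiryaev.
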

\begin{proof} 
Since the expert opinions arrive at discrete points in time it holds $\mathcal {F}^E_t=\mathcal {F}^E_{t_k} $  for $t\in [t_k,t_{k+1}),~k=0,\ldots,N-1$. Then we have
$\widehat \mu_t^E =E[\mu_t|\mathcal {F}^E_t] = E[\mu_t|\mathcal {F}^E_{t_k}] $ and
$\gamma_t^E =E[(\mu_t-\widehat \mu_t^E)^2|\mathcal {F}^E_t] = E[(\mu_t-\widehat \mu_t^E)^2|\mathcal {F}^E_{t_k}] $. According to  \eqref{mu_explicit} we get
\[\mu_t = \delta +e^{-\alpha (t-t_k)}\Big[(\mu_{t_k} -\delta) + \beta  \int_{t_k}^t e^{\alpha (s-t_k)} dB_s\Big].
\]
Therefore, $\widehat \mu_t^E = \delta + e^{-\alpha(t-t_k)}(E[ \mu_{t_k} \,|\,\mathcal{F}_t^E]-\delta)= e^{-\alpha(t-t_k)}\widehat \mu_{t_k}^E + (1-e^{-\alpha(t-t_k)})\delta $ and
\begin{eqnarray*}
\gamma_t^E &=& E\left[ \left(e^{-\alpha(t-t_k)}(\mu_{t_k}-\widehat \mu_{t_k}^E)
+ \beta\,e^{-\alpha(t-t_k)} \int_{t_k}^t e^{\alpha (s-t_k)} dB_s\right)^2\,\Big|\, \mathcal{F}_t^E \right]\\
&=&  e^{- 2 \alpha(t-t_k)} E\left[\left(\mu_{t_k}-\widehat \mu_{t_k}^E\right)^2  \mid \mathcal{F}_{t_k}^E \right]
 + \beta^2 \,e^{-2 \alpha(t-t_k)} E\left[ \int_{t_k}^t e^{ 2 \alpha (s-t_k)} ds\right]\\
&=& e^{- 2 \alpha(t-t_k)} \gamma_{t_k}^E + \frac{\beta^2}{2\alpha} \left(1- \,e^{-2 \alpha(t-t_k)}\right),
\end{eqnarray*}
where we used the martingale property of the stochastic integral and the It\^{o}-Isometry. This yields the representations in \eqref{filter_E0} and \eqref{cond_var_E0}.

The updating formulas \eqref{filter_E_update} and \eqref{cond_var_E_update} can be seen as an update of a degenerate discrete-time Kalman filter, see e.g.\ formulas (5.12) and (5.13) in Section 4.5 of Elliott, Aggoun and Moore \cite{Elliott et al. (1994)}. It is degenerate here, since there is no evolution in time from $t_{k}-$ to $t_k$. Alternatively the updating formulas may be computed directly as a Bayesian update of $\widehat\mu_{t_k-}^E$ given the $\mathcal{N}(\mu_{t_k}, \Gamma_k)$-distributed expert opinion, cf.\ Theorem II.8.2 in  Shiryaev \cite{Shiryaev}.
\end{proof}
\begin{remark}
The updating formula \eqref{filter_E_update} for the conditional mean $\widehat{\mu}^E_{t_k}$ shows that the filter after arrival of the $k$-th expert opinion is a weighted mean of the filter $\widehat{\mu}^E_{t_k-}$ before the arrival and the view $Z_k$ of the expert's view. The weight $\lambda_k^E\in [0,1]$ decreases with decreasing reliability $\varexp_k$ (i.e~increasing confidence)  of the expert. So more weight $1-\lambda_k^E$ is given to the view. For the limiting case $\varexp_k= 0$ (expert has full information) we have $1-\lambda_k^E=1$ and $\widehat{\mu}^E_{t_k}=Z_k=\mu_{t_k}$. For $\varexp_k= \infty$ we have $\lambda_k^E=1$ and $\widehat{\mu}^E_{t_k}=\widehat{\mu}^E_{t_k-}$, i.e., there is no impact of the expert's view since it carries no information on the unknown drift $\mu_t$.

From updating formula \eqref{cond_var_E_update} for the conditional variance $\gamma^E_{t_k}$ it can be seen that
$\gamma^E_{t_k} \le \min\{\gamma^E_{t_k-}, \varexp_k\}$, i.e.~the extra information never increases the conditional variance. For the limiting case $\varexp_k= 0$ we have $\gamma^E_{t_k}=0$ while for $\varexp_k= \infty$ we have $\gamma^E_{t_k}=\gamma^E_{t_k-}$. Again there is no impact of the expert's view.
\end{remark}

\paragraph{Return observations and expert opinions ($H=C$)}
This combination of the settings $H=R$ and $H=E$ is the case we are mainly interested in. An investor typically uses all available information,  stock returns and expert opinions.

\begin{lemma}
\label{filter_C}
\ \\[-2ex]
\begin{enumerate}
\item[(i)]
Between two information dates $t_k$ and $t_{k+1}$ it holds for $ t\in [t_k,t_{k+1})$, $k=0,\ldots,N-1$ that $\widehat \mu_t^C$ is  Gaussian  and satisfies
\begin{eqnarray}
\label{filter_C0}
d\widehat{\mu}^C_t &=& \big(\alpha\delta-(\alpha+\sigma^{-2}\gamma_t^C)\;\widehat{\mu}^C_t\big)
dt+\sigma^{-2}\gamma_t^C\;dR_t,
\\
\label{cond_var_C0}
\text{with }~~\gamma^C_t &=& -\alpha\sigma^2+C_0\;\frac{C_{1k}+C_{2k}\,e^{-2C_0\sigma^{-2}(t-t_k)}}{C_{1k}-C_{2k}\,e^{-2C_0\sigma^{-2}(t-t_k)}}
\end{eqnarray}
and initial values $\widehat\mu_{t_k}^C $ and $\gamma^C_{t_k}$, $k=0,\ldots,N-1$. The constant $C_0$ is given in  \eqref{solution_Riccatti} and for $k=0,\ldots,N-1$
\begin{equation}
\label{Ck}
C_{1k}:=\gamma_{t_k}^C+\alpha\sigma^2+ C_0\quad \text{and}\quad C_{2k}:=\gamma_{t_k}^C+\alpha\sigma^2- C_0 .
\end{equation}
\item[(ii)]
At the information dates $t_k$
 it holds that $\widehat \mu_{t_k}^C$ is  Gaussian  and
$\widehat \mu_{t_k}^C$ and $\gamma_{t_k}^C$ are obtained from the corresponding values at time $t_k-$ (before the arrival of the view) using the updating formulas  \eqref{filter_E_update} and \eqref{cond_var_E_update}, respectively, i.e.,
\begin{equation}\label{updatingC}
\widehat \mu_{t_k}^C  = \lambda_k^C\;\widehat \mu_{t_k-}^C + (1-\lambda_k^C)\,Z_k  \quad \text{ and }\quad
\gamma_{t_k}^C  = \lambda_k^C\,\gamma^C_{t_k-},
\end{equation}
where $\lambda_t^C = ({\gamma^C_{t_k-}+\varexp_k})^{-1}{\varexp_k}$ and $ \widehat \mu_{0-}^C:=m_0$,  $\gamma_{0-}^C:=\nu_0$.
\end{enumerate}
\end{lemma}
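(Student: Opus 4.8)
The plan is to regard the $\mathbb{F}^C$-filter on $[0,T]$ as the alternation of two mechanisms: on each interval $[t_k,t_{k+1})$ no expert opinion arrives, so only the return observations generate new information and the filter evolves by the continuous-time Kalman--Bucy filter; at each information date $t_k$ the expert opinion $\expert_k$ is incorporated by a Gaussian Bayesian update, exactly as in Lemma~\ref{filter_E}(ii). The backbone of the argument is an induction on $k=0,\dots,N-1$ establishing that $\mu_{t_k}$ is conditionally Gaussian given $\mathcal{F}^C_{t_k-}$ with parameters $(\widehat\mu^C_{t_k-},\gamma^C_{t_k-})$ and with $\gamma^C_{t_k-}$ deterministic; conditional Gaussianity is then propagated through the update at $t_k$ and through the evolution on $[t_k,t_{k+1})$. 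Once this invariant is in hand, both displayed systems of formulas are just the Kalman--Bucy equations and the scalar Gaussian update, neither of which needs a fresh computation.

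First I would treat the base case: at $t_0=0$ one has $\mathcal{F}^C_{0-}=\sigma(\mathcal{N})$, so $\mu_0\sim\mathcal{N}(m_0,\varmu_0)$ gives $\widehat\mu^C_{0-}=m_0$ and $\gamma^C_{0-}=\varmu_0$ as stated. For the update step at a general $t_k$ I would observe that the only new information at $t_k$ is $\expert_k=\mu_{t_k}+\sqrt{\varexp_k}\,\eps_k$, with $\eps_k$ independent of $B$, $W$ and $\eps_0,\dots,\eps_{k-1}$, hence of $\mathcal{F}^C_{t_k-}\vee\sigma(\mu_{t_k})$; therefore, conditionally on $\mathcal{F}^C_{t_k-}$, the pair $(\mu_{t_k},\expert_k)$ is Gaussian with $\mu_{t_k}$-marginal $\mathcal{N}(\widehat\mu^C_{t_k-},\gamma^C_{t_k-})$ and observation-noise variance $\varexp_k$, and adding $\expert_k$ to the conditioning is the elementary Gaussian Bayes update. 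This is verbatim the computation in Lemma~\ref{filter_E}(ii) (equivalently Theorem~II.8.2 in Shiryaev~\cite{Shiryaev}, or the degenerate discrete Kalman step of Elliott, Aggoun and Moore~\cite{Elliott et al. (1994)}), so it yields \eqref{updatingC} with $\lambda^C_k=\varexp_k/(\gamma^C_{t_k-}+\varexp_k)$, leaves $\mu_{t_k}$ conditionally Gaussian given $\mathcal{F}^C_{t_k}$, and keeps $\gamma^C_{t_k}=\lambda^C_k\gamma^C_{t_k-}$ deterministic.

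Next I would handle the evolution on $[t_k,t_{k+1})$, where $\mathcal{F}^C_t=\mathcal{F}^C_{t_k}\vee\sigma(R_s-R_{t_k}:t_k\le s\le t)$. The increments of $W$ after $t_k$ are independent of $B$, of $\eps_0,\eps_1,\dots$ and of $W$ up to $t_k$, hence of $\mathcal{F}^C_{t_k}$; so the system made of the linear SDE \eqref{drift} for $\mu$ and the observation $dR_t=\mu_t\,dt+\sigma\,dW_t$, started at $t_k$ from the $\mathcal{F}^C_{t_k}$-conditionally Gaussian value $\mu_{t_k}$, is a conditionally Gaussian linear filtering system. The Kalman--Bucy theorem (Liptser and Shiryaev~\cite{Liptser-Shiryaev}) then yields $d\widehat\mu^C_t=\alpha(\delta-\widehat\mu^C_t)\,dt+\sigma^{-2}\gamma^C_t\,(dR_t-\widehat\mu^C_t\,dt)$, which rearranges to \eqref{filter_C0}, together with the same Riccati ODE $\frac{d}{dt}\gamma^C_t=-\sigma^{-2}(\gamma^C_t)^2-2\alpha\gamma^C_t+\beta^2$ as in \eqref{cond_var_R}, now with initial value $\gamma^C_{t_k}$ at time $t_k$. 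Solving it exactly as in the derivation of \eqref{solution_Riccatti}, with $\varmu_0$ replaced by $\gamma^C_{t_k}$ and $t$ by $t-t_k$, gives \eqref{cond_var_C0} with the constants \eqref{Ck}; and $\mu_t$ stays conditionally Gaussian given $\mathcal{F}^C_t$, so letting $t\uparrow t_{k+1}$ completes the induction step.

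The hard part will be the conditional-Gaussianity bookkeeping across the alternation rather than any individual computation: one must check that conditioning on the mixed $\sigma$-algebra $\mathcal{F}^C_{t_k}$ --- a continuous return path together with finitely many noisy point observations of $\mu$ --- still leaves a conditionally Gaussian linear system to which Kalman--Bucy applies, and that the return noise on $[t_k,t_{k+1})$ is independent of $\mathcal{F}^C_{t_k}$. Granting that, the Riccati solution is copied from the $H=R$ case and the scalar Bayesian step from the $H=E$ case, so no further work is needed.
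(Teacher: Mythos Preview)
Your proposal is correct and follows essentially the same approach as the paper: Kalman--Bucy filtering between information dates (reducing to the $H=R$ case with shifted initial data) combined with the Gaussian Bayesian/degenerate-Kalman update at information dates (reducing to the $H=E$ case). You are more explicit than the paper about the inductive structure and the conditional-Gaussianity invariant, but the argument is the same.
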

\begin{proof}
Between two information dates $t_k$ and $t_{k+1}$ we are in the standard situation of the Kalman filter with Gaussian initial values $\mu_{t_k}$, $\widehat \mu_{t_k}^C$ for  signal and filter and deterministic value $\gamma_{t_k}^C$ for the conditional variance. Since no additional expert opinions arrive in  $(t_k,t_{k+1})$ only the returns contribute to the investor filtration $\mathbb{F}^C$ and we have $\mathcal{F}_t^C= \mathcal{F}_{t_k}^C \vee \sigma\{R_s, t_k< s\le t\}$ for $ t\in (t_k,t_{k+1})$. So \eqref{filter_C0} and \eqref{cond_var_C0} follow immediately from the the corresponding Kalman filter equations \eqref{filter_R} and \eqref{solution_Riccatti}.

At the expert information dates we get the updating formulas in \eqref{updatingC} by applying a degenerate Kalman updating or  Bayesian updating formulas for Gaussian prior  $\widehat \mu_{t_k-}^C$ and Gaussian expert opinion $\expert_k$  as in the proof of Lemma \ref{filter_E}.
\end{proof}
\begin{remark}
\label{remark_EC}
We obtain $\widehat\mu_{t}^E$ and $\gamma^E_t$ from $\widehat\mu_{t}^C$ and $\gamma^C_t$ given in the above Lemma  for the limiting case $\sigma=\infty$. Then between the information dates $\widehat\mu_{t}^E$ is governed by the deterministic ODE $\frac{d}{dt}\widehat{\mu}^E_t = \alpha\big(\delta-\widehat{\mu}^E_t\big)$ while the conditional variance satisfies the linear ODE $\frac{d}{dt}\gamma^E_t=-2\alpha\gamma^E_t+\beta^2$. Solving these equations yields the expressions given in Lemma \ref{filter_E}. The interpretation of this limiting case $\sigma=\infty$ is that the volatility is such high that no additional information can be retrieved from observing the stock returns and thus it is enough to consider the expert opinions.
\end{remark}

\paragraph{Full information ($H=F$)}
For information $\mathbb{F}^F=\mathbb{G}$ it obviously holds $\widehat{\mu}^F_t=E[\mu_t|\mathcal{G}_t]=\mu_t$, i.e.~the conditional variance $\gamma_t^F$ is zero. Below we will study the conditional variances  $\gamma_t^E$ and  $\gamma_t^C$ and show, that these values tend to zero if the number of information dates $N$ tends to $\infty$, i.e.~asymptotically the value for full information is obtained.

\section{Properties of the Conditional Variance}
\label{cond_variance}

As a special feature of the filters using $\mathbb{F}^H$, $H\in \{R,E,C,F\}$, which we considered in Section \ref{partial_info},  we have a conditional variance $\gamma_t^H$  which is deterministic as it is known for the standard Kalman filter (case $H=R$). This leads to the following result for the second-order moment of the filter $\widehat \mu_t^H$ which will play a crucial role in the proof of our main result in Theorem \ref{opt_value_theorem}.
\begin{lemma}
\label{filter_moment_lemma}\ \\
For the second-order moment of the filter $\widehat \mu_t^H=E[\mu_t|\mathbb{F}^H]$, where $H\in \{R,E,C,F\}$, it holds for all $t\in[0,T]$
\begin{equation}
\label{filter_moment}
E[(\widehat \mu_t^H)^2] = E[\mu_t^2] - \gamma_t^H = \nu_t+ m_t^2 -\gamma_t^H.
\end{equation}
\end{lemma}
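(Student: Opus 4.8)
The plan is to use the standard orthogonality decomposition underlying conditional expectations. For each $H\in\{R,E,C,F\}$ and each fixed $t$, write $\mu_t = \widehat\mu_t^H + (\mu_t - \widehat\mu_t^H)$ and observe that $\widehat\mu_t^H$ is $\mathcal{F}_t^H$-measurable while the conditional variance $\gamma_t^H = E[(\mu_t-\widehat\mu_t^H)^2\mid\mathcal{F}_t^H]$ of the error is, by the results of Section~\ref{partial_info}, \emph{deterministic} in every one of the four cases (for $H=F$ it is identically $0$). First I would take expectations in the pointwise identity
\[
\mu_t^2 = (\widehat\mu_t^H)^2 + 2\,\widehat\mu_t^H(\mu_t-\widehat\mu_t^H) + (\mu_t-\widehat\mu_t^H)^2 .
\]
The cross term has zero expectation: conditioning on $\mathcal{F}_t^H$ and pulling out the $\mathcal{F}_t^H$-measurable factor $\widehat\mu_t^H$ gives $E[\widehat\mu_t^H(\mu_t-\widehat\mu_t^H)] = E\big[\widehat\mu_t^H\,E[\mu_t-\widehat\mu_t^H\mid\mathcal{F}_t^H]\big] = 0$, since $E[\mu_t\mid\mathcal{F}_t^H]=\widehat\mu_t^H$ by definition of the filter. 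For the last term, $E[(\mu_t-\widehat\mu_t^H)^2] = E\big[E[(\mu_t-\widehat\mu_t^H)^2\mid\mathcal{F}_t^H]\big] = E[\gamma_t^H] = \gamma_t^H$ because $\gamma_t^H$ is deterministic. Combining, $E[\mu_t^2] = E[(\widehat\mu_t^H)^2] + \gamma_t^H$, which is the first equality in \eqref{filter_moment}.

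The second equality is then immediate from the formula for the second moment of $\mu_t$: since $\mu_t$ is Gaussian with mean $m_t$ and variance $\nu_t$ (see \eqref{mu_mean}--\eqref{mu_var}), we have $E[\mu_t^2] = \var[\mu_t] + (E[\mu_t])^2 = \nu_t + m_t^2$. Substituting gives $E[(\widehat\mu_t^H)^2] = \nu_t + m_t^2 - \gamma_t^H$.

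There is no serious obstacle here; the only point needing a little care is to make sure the argument is uniform over the four information regimes and, in particular, that it is legitimate at the expert information dates $t_k$, where $\gamma_t^C$ (and $\gamma_t^E$) jumps. This is fine: the identity is pointwise in $t$, and at each $t$ the quantities $\widehat\mu_t^H$ and $\gamma_t^H$ are the (right-continuous) post-update values, which still satisfy $E[\mu_t\mid\mathcal{F}_t^H]=\widehat\mu_t^H$ and the deterministic conditional-variance property established in Lemmas~\ref{filter_E} and~\ref{filter_C}. One should also note integrability is not an issue: $\mu_t$ is Gaussian, hence square-integrable, and $\widehat\mu_t^H$ is an $L^2$-projection, so all expectations above are finite and the manipulations are justified. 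Thus the same chain of equalities covers $H\in\{R,E,C\}$, and the case $H=F$ follows by setting $\gamma_t^F=0$ (then $\widehat\mu_t^F=\mu_t$ and the statement is trivial).
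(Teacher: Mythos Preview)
Your proof is correct and follows essentially the same route as the paper: both arguments expand the square, use that $\widehat\mu_t^H$ is $\mathcal{F}_t^H$-measurable to kill the cross term, and then invoke the crucial fact that $\gamma_t^H$ is deterministic so that $E[(\mu_t-\widehat\mu_t^H)^2]=\gamma_t^H$. The only cosmetic difference is that the paper expands $\gamma_t^H=E[(\mu_t-\widehat\mu_t^H)^2\mid\mathcal{F}_t^H]$ first and then takes expectations, whereas you expand $\mu_t^2$ first; the underlying Pythagorean/orthogonality identity is identical.
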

\begin{proof}
It holds
\begin{eqnarray*}
\gamma_t^H = E[(\mu_t-\widehat{\mu}^H_t)^2|\mathcal{F}^H_t] &=& E[\mu_t^2|\mathcal{F}^H_t] -2 E[\mu_t \widehat{\mu}^H_t|\mathcal{F}^H_t] + E[(\widehat{\mu}^H_t)^2|\mathcal{F}^H_t]\\
&=& E[\mu_t^2|\mathcal{F}^H_t]  - 2 (\widehat{\mu}^H_t)^2 + (\widehat{\mu}^H_t)^2
= E[\mu_t^2|\mathcal{F}^H_t]    - (\widehat{\mu}^H_t)^2.
\end{eqnarray*}
Since the conditional variance $\gamma_t^H$ for $H=R,E,C,F$ is deterministic, we have $E[\gamma_t^H] = \gamma_t^H$ and
\begin{eqnarray*}
E[(\widehat \mu_t^H)^2] &=& E\big[E[\mu_t^2|\mathcal{F}^H_t]\big] - E[\gamma_t^H]= E[\mu_t^2]   - \gamma_t^H
=  \nu_t+ m_t^2 -\gamma_t^H,
\end{eqnarray*}
where we have used that the drift $\mu$ is an Ornstein-Uhlenbeck process with mean $m_t$ and variance $\nu_t$ given in \eqref{mu_mean} and \eqref{mu_var}.
\end{proof}
The next proposition formally states an intuitive property of the filters namely that additional information on the unknown drift leads to an improvement of the drift estimate. This improvement can be measured by the conditional variance $\gamma^H$ of the filter $\widehat \mu_t^H$. We compare an investor  observing both returns and expert opinions (H=C) with an investor who  has access to only one of these  sources of information (H=R,E).
\begin{proposition}
\label{prop_cond_var}\ \\
It holds for all $t\in[0,T]$
\[\gamma_t^C \le \gamma_t^E \quad \text{and}\quad \gamma_t^C \le \gamma_t^R. \]
\end{proposition}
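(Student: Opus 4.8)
The plan is to compare the conditional variances by exploiting the fact that the filtration $\mathbb{F}^C$ contains both $\mathbb{F}^R$ and $\mathbb{F}^E$, so that intuitively more information cannot increase the mean-square estimation error. The cleanest way to make this rigorous is to use Lemma \ref{filter_moment_lemma}: since each $\gamma_t^H$ is deterministic, we have $\gamma_t^H = \nu_t + m_t^2 - E[(\widehat\mu_t^H)^2]$, and therefore $\gamma_t^C \le \gamma_t^R$ is equivalent to $E[(\widehat\mu_t^C)^2] \ge E[(\widehat\mu_t^R)^2]$, and similarly for the comparison with $H=E$. Now because $\widehat\mu_t^H = E[\mu_t \mid \mathcal{F}_t^H]$ and $\mathcal{F}_t^R \subseteq \mathcal{F}_t^C$, the tower property gives $\widehat\mu_t^R = E[\widehat\mu_t^C \mid \mathcal{F}_t^R]$, so $\widehat\mu_t^R$ is the $L^2$-projection of $\widehat\mu_t^C$ onto the smaller $\sigma$-algebra; hence $E[(\widehat\mu_t^R)^2] \le E[(\widehat\mu_t^C)^2]$ by Jensen's inequality (or directly, conditional expectation is an $L^2$-contraction). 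The same argument with $\mathcal{F}_t^E \subseteq \mathcal{F}_t^C$ yields $E[(\widehat\mu_t^E)^2] \le E[(\widehat\mu_t^C)^2]$. Combining these with the identity from Lemma \ref{filter_moment_lemma} gives both claimed inequalities at once.

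An alternative, more computational route would be to compare the deterministic trajectories $t\mapsto \gamma_t^C$ and $t\mapsto \gamma_t^R$ directly using the ODEs and updating formulas. On each interval $[t_k, t_{k+1})$ the function $\gamma_t^C$ solves the same Riccati ODE \eqref{cond_var_R} as $\gamma_t^R$ (just restarted from a different, smaller initial value), and at each $t_k$ the jump $\gamma_{t_k}^C = \lambda_k^C \gamma_{t_k-}^C \le \gamma_{t_k-}^C$ only decreases $\gamma^C$. Since the Riccati flow is monotone in the initial condition (two solutions of the same scalar Riccati equation started from ordered initial data stay ordered, because the right-hand side is Lipschitz in $\gamma$ on bounded sets and solutions cannot cross), one can argue inductively over the intervals that $\gamma_t^C \le \gamma_t^R$ for all $t$; the comparison $\gamma_t^C \le \gamma_t^E$ is similar using the linear ODE from Remark \ref{remark_EC} and the same updating jumps. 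However, this approach is longer and requires care at the jump times, so I would relegate it to a remark.

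The main obstacle — and the reason the slick projection argument is worth stating carefully — is making sure the $L^2$-contraction step is legitimate, i.e.\ that $\widehat\mu_t^C \in L^2$ and that the nesting $\mathcal{F}_t^R \vee \mathcal{N} \subseteq \mathcal{F}_t^C$ (and likewise for $E$) genuinely holds after augmentation by the null sets. Integrability is not an issue here: $\mu_t$ is Gaussian, hence in $L^2$, and conditional expectation is an $L^2$-contraction, so $\widehat\mu_t^C \in L^2$ automatically. The filtration nesting is immediate from the definition $\mathcal{F}_t^C = \mathcal{F}_t^R \vee \mathcal{F}_t^E$ recorded just after the definition of the investor filtrations. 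With these two points checked, the proof is essentially one line: $E[(\widehat\mu_t^R)^2] = E\big[(E[\widehat\mu_t^C\mid\mathcal{F}_t^R])^2\big] \le E[(\widehat\mu_t^C)^2]$, and subtracting from $\nu_t+m_t^2$ reverses the inequality to give $\gamma_t^C \le \gamma_t^R$; the $E$-case is identical.
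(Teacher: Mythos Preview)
Your proof is correct and takes a genuinely different route from the paper. The paper argues directly with the ODEs: on each interval $[t_k,t_{k+1})$ the conditional variances satisfy $\frac{d}{dt}\gamma_t^H = f^H(\gamma_t^H)$ with $f^C=f^R$ and $f^C\le f^E$; starting from ordered (or equal) initial values, a uniqueness/comparison argument for scalar ODEs propagates the inequality, and one checks separately that the Bayesian update $x\mapsto x\varexp_k/(x+\varexp_k)$ preserves the order at each information date. This is precisely the alternative you sketch and propose to relegate to a remark.

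Your main argument --- using Lemma~\ref{filter_moment_lemma} to convert the comparison of the deterministic $\gamma_t^H$ into the inequality $E[(\widehat\mu_t^R)^2]\le E[(\widehat\mu_t^C)^2]$, which then follows from the tower property and the $L^2$-contraction of conditional expectation under the nesting $\mathcal F_t^R\subseteq\mathcal F_t^C$ --- is shorter, conceptually cleaner, and generalizes without change (e.g.\ to the multivariate setting mentioned in the introduction, or to any other pair of nested investor filtrations). The paper's ODE comparison, by contrast, is more computational but has the virtue of being self-contained at the level of the explicit dynamics and of showing concretely how the ordering survives both the Riccati flow and the jumps. Either approach is fine; yours avoids the inductive bookkeeping over the information dates entirely.
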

\begin{proof}
Between two information dates $t_k$ and $t_{k+1}$ the conditional variances $\gamma_t^H$ for $H=R,E,C$ satisfy the ODE
\[\frac{d}{dt}\gamma^H_t=f^H(\gamma_t^H),\quad \text{for } t\in[t_k,t_{k+1}),
\]
with initial value $\gamma^H_{t_k}$ where the r.h.s.~of this ODE is given by $f^R(y)=f^C(y)=-\sigma^{-2}y^2-2\alpha y+\beta^2$ (see Ricatti equation \eqref{cond_var_R} and Lemma \ref{filter_C}) and $f^E(y) =-2\alpha y+\beta^2  $ (see Remark \ref{remark_EC}).  It is well-known that this ODE has a unique solution.

For the proof of $\gamma_t^C \le \gamma_t^E$ we first note that $\gamma_0^C= \gamma_0^E = \nu_0 \frac{\varexp_0}{\nu_0+\varexp_0} $. It holds  $f^C(y) \le f^E(y)=f^C(y)+\sigma^{-2} y^2$. This implies that starting with coinciding initial values the solutions of the above ODE satisfy $\gamma_t^C \le \gamma_t^E$ on $[t_0,t_1)$.
This inequality also holds after the update at  $t=t_1$ since $\gamma_{t_1}^H =\gamma_{t_1-}^H\varexp_1/(\gamma_{t_1-}^H+\varexp_1)$ for $H=E,C$ and $x\mapsto x\varexp_1/(x+\varexp_1)$ is increasing in $x$. Iterating these arguments for $k=1,\ldots,N-1$ yields $\gamma_t^C \le \gamma_t^E$ for all $t\in[0,T]$.

For the proof of $\gamma_t^C \le \gamma_t^R$ we observe that $\gamma_0^C= \nu_0 \frac{\varexp_0}{\nu_0+\varexp_0} \le \nu_0 =\gamma_0^R  $ and $f^C=f^R$. The uniqueness of the solution of the above ODE yields that the inequality for the initial values is inherited to the solutions on $[t_0,t_1)$, i.e.~it holds $\gamma_t^C \le \gamma_t^R$. Then also  $\gamma_{t_1}^C$, the conditional variance  after the update at time $t_1$,  satisfies
\[\gamma_{t_1}^C =\gamma_{t_1-}^C\frac{\varexp_1}{\gamma_{t_1-}^C+\varexp_1} \le \gamma_{t_1-}^C\le  \gamma_{t_1}^R.\]
Iterating this argument for $k=1,\ldots,N-1$ yields $\gamma_t^C \le \gamma_t^R$ for all $t\in[0,T]$.
\end{proof}

The next Proposition formalizes another intuitive property of the filters. If the number $N$ of  expert opinions  tends to infinity, i.e., the extra information arrives more and more frequent, then in the limit for $N\to \infty$ we arrive at the case of full information about the unknown drift. This case is characterized by a vanishing conditional variance yielding in the limit a perfect estimate of $\mu_t$, see Remark \ref{vanishingCondVar} below.
\begin{proposition}
\label{prop_cond_var_asymN}\ \textbf{Asymptotics for $N\to \infty$}\\
Let $
\{t_0^{(N)},\ldots,t_N^{(N)}\}$ be a sequence of partitions of the interval $[0,T]$ into $N$ subintervals with mesh size
 $\Delta_N := \max_{k=1,\ldots,N} \{t_k^{(N)}-t_{k-1}^{(N)} \}$ and such that information dates are retained, i.e., $\{t_0^{(N)},\ldots,t_N^{(N)}\} \subseteq \{t_0^{(N')},\ldots,t_{N'}^{(N')}\}$ for $N'\ge N$. Moreover, let $(\varexp_k^{(N)})_{k=0,\ldots,N-1}$ be a sequence of corresponding variances of the expert opinions at time $t_k^{(N)}$. Assume that there  is some constant $\varexpbound>0$ such that $\varexp_k^{N} \le \varexpbound$ for all $k=0,\ldots,N-1$ and $N\in \N$.

Then it holds for the conditional variances $\gamma_t^{E,N}$ and $\gamma_t^{C,N}$, which correspond to these $N$ expert opinions,  that  for  all $t\in(0,T]$
\[\lim\limits_{N\to\infty,\Delta_N\to 0} \gamma_t^{E,N} = \lim\limits_{N\to\infty,\Delta_N\to 0} \gamma_t^{C,N} =0.\]
\end{proposition}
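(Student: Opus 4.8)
The plan is to reduce everything to the pure--expert case and then run a telescoping estimate on the \emph{reciprocal} of the conditional variance. By Proposition~\ref{prop_cond_var} one has $0\le\gamma_t^{C,N}\le\gamma_t^{E,N}$ for every $N$, so it suffices to prove $\gamma_t^{E,N}\to 0$. From Lemma~\ref{filter_E} I would extract two facts. (a)~A \emph{slow-growth} bound: between information dates $\gamma^{E}$ satisfies $\frac{d}{dt}\gamma_t^E=\beta^2-2\alpha\gamma_t^E\le\beta^2$ and at an information date it only jumps downward, hence $\gamma_t^{E,N}\le\gamma_s^{E,N}+\beta^2(t-s)$ for all $s\le t$. (b)~An \emph{additive update} in reciprocals: the updating formula \eqref{cond_var_E_update} reads $1/\gamma_{t_k}^{E,N}=1/\gamma_{t_k-}^{E,N}+1/\varexp_k^{(N)}\ge 1/\gamma_{t_k-}^{E,N}+1/\varexpbound$. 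Combining (a) and (b) along consecutive information dates $\tau_1<\ldots<\tau_m$ and writing $a_j:=1/\gamma_{\tau_j}^{E,N}$ and $\delta_j:=\tau_{j+1}-\tau_j$, one gets $\gamma_{\tau_{j+1}-}^{E,N}\le\gamma_{\tau_j}^{E,N}+\beta^2\delta_j$ and therefore
\[
a_{j+1}\ \ge\ \frac{1}{\gamma_{\tau_j}^{E,N}+\beta^2\delta_j}+\frac{1}{\varexpbound}\ =\ \frac{a_j}{1+\beta^2\delta_j a_j}+\frac{1}{\varexpbound}\ \ge\ a_j-\beta^2\delta_j\,a_j^2+\frac{1}{\varexpbound}.
\]

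Next I fix $t\in(0,T]$ and $\eps>0$ and set $\rho:=\min\{t/2,\,\eps/\beta^2\}>0$. For $N$ large enough that $\Delta_N<\rho$ there is a last information date $s_N\le t$ with $t-s_N\le\Delta_N$, and the information dates lying in $[t-\rho,t]$ form a block $\tau_1<\ldots<\tau_m=s_N$ with $m\ge\rho/\Delta_N-2\to\infty$. Now distinguish two cases. If $\gamma_{\tau_j}^{E,N}\le\eps$ for some $j$, the slow-growth bound gives $\gamma_t^{E,N}\le\gamma_{\tau_j}^{E,N}+\beta^2(t-\tau_j)\le\eps+\beta^2\rho\le 2\eps$. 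If instead $\gamma_{\tau_j}^{E,N}>\eps$ for all $j$, then $a_j<1/\eps$ throughout, so $\beta^2\delta_j a_j^2\le\beta^2\delta_j/\eps^2$, and telescoping the displayed inequality over $j=1,\ldots,m-1$ --- using $\sum_j\delta_j=\tau_m-\tau_1\le\rho$ and $a_1\ge0$ --- yields $1/\gamma_{s_N}^{E,N}=a_m\ge(m-1)/\varexpbound-\beta^2\rho/\eps^2$. Since also $a_m<1/\eps$ and $\beta^2\rho\le\eps$, this forces $m-1<2\varexpbound/\eps$, which is impossible once $N$ is so large that $m\ge\rho/\Delta_N-2$ exceeds $1+2\varexpbound/\eps$. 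Hence for all large $N$ only the first case occurs, so $\limsup_N\gamma_t^{E,N}\le2\eps$; as $\eps>0$ was arbitrary, $\gamma_t^{E,N}\to0$, and then $\gamma_t^{C,N}\to0$ by Proposition~\ref{prop_cond_var}.

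The main obstacle is exactly the bookkeeping in this reciprocal recursion. Passing to $1/\gamma^{E}$ turns each Bayesian update into an honest additive gain of at least $1/\varexpbound$, but it distorts the deterministic between-date dynamics, and the resulting error term $\beta^2\delta_j a_j^2$ is large precisely when $\gamma_{\tau_j}^{E,N}$ is small --- i.e.\ when we are already content. The case distinction is the device that resolves this: either $\gamma^{E,N}$ has already fallen below $\eps$ somewhere in the short window $[t-\rho,t]$, and then it stays below $2\eps$ up to $t$ because re-randomization over a window of length $\le\eps/\beta^2$ costs at most $\eps$; or it stays above $\eps$, in which case $a_j<1/\eps$ controls the error term, each of the $m\to\infty$ updates adds a full $1/\varexpbound$ to the inverse variance, and the accumulated between-date losses are only $\beta^2\rho/\eps^2$, which gives the contradiction. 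The uniform bound $\varexpbound$ on the expert variances is genuinely needed here; degenerate experts with $\varexp_k^{(N)}=0$ simply fall into the first case since then $\gamma_{\tau_k}^{E,N}=0$; and the retention hypothesis on the partitions plays no role in this proposition.
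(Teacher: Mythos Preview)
Your proof is correct and takes a genuinely different route from the paper's argument. Both proofs begin with the same reductions (pass from $H=C$ to $H=E$ via Proposition~\ref{prop_cond_var}, use the slow-growth bound $\gamma_t^{E,N}\le\gamma_s^{E,N}+\beta^2(t-s)$) and both close with a case distinction of the type ``either $\gamma^{E,N}$ already dipped below the target level, or we derive a contradiction from too many updates.'' The essential difference lies in the bookkeeping of the iteration. The paper tracks the \emph{multiplicative} contraction factor $\lambda_k^{E,N}=\varexpbound/(\gamma_{t_k-}^{E,N}+\varexpbound)$: under the assumption $\gamma_{t_k-}^{E,N}\ge\eps/2$ one gets a uniform bound $\lambda_k^{E,N}\le\varexpbound/(\eps/2+\varexpbound)<1$, and the iteration from time $0$ then controls $\gamma_{t_k-}^{E,N}$ by a geometric sum. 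You instead pass to the reciprocal (the Bayesian precision) and exploit that each update contributes \emph{additively} at least $1/\varexpbound$; the between-date dynamics then produces the quadratic loss term $\beta^2\delta_j a_j^2$, which you tame by the upper bound $a_j<1/\eps$ available in the second case and by telescoping over a short window $[t-\rho,t]$.

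What your approach buys: the precision viewpoint is arguably more natural for Gaussian updating, it dispenses with the preliminary reduction to constant expert variance $\varexp_k\equiv\varexpbound$ (you use $\varexp_k^{(N)}\le\varexpbound$ directly), and the argument is fully local to a window near $t$ rather than iterating from time $0$. What the paper's approach buys: the geometric-sum estimate is elementary and avoids introducing the auxiliary quadratic error term that your reciprocal transformation creates; one never needs the inequality $\tfrac{1}{1+x}\ge 1-x$. Your remark that the nestedness hypothesis on the partitions plays no role is accurate for this proposition: only $\Delta_N\to 0$ (hence $m\to\infty$) is used in either proof.
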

\begin{proof}
Since $0\le \gamma_t^C \le \gamma_t^E$, see
Proposition \ref{prop_cond_var}, we can restrict to the proof of  the assertion for $\gamma_t^{E,N}$. Moreover,
we  restrict to  expert opinions with constant  uncertainties $\varexp_k=\varexpbound$.
Then $\gamma_t^{E,N}$ dominates  the conditional variance   in the case where expert variances $\varexp_k$ are smaller than $\varexpbound$ and the general assertion follows.
We shall write $t_k$ for $t_k^{(N)}$ keeping the dependency on $N$ in mind.

For the dynamics of $\gamma_t^{E,N}$ we have from Lemma \ref{filter_E} for $k=0,\ldots,N-1$ and any $t\in  [t_k,t_{k+1})$
\begin{equation}\label{proof43eq1}
\gamma_t^{E,N}  =  e^{-2\alpha(t-t_k)}\gamma_{t_k}^{E,N}  +\left(1-e^{-2\alpha(t-t_k)}\right) \frac{\beta^2}{2\alpha}
\end{equation}
and
\begin{equation}\label{proof43eq2}
\gamma_{t_k}^{E,N} =\lambda_k^{E,N}\,\gamma^{E,N}_{t_k-}   \quad
\text{ with } \quad \lambda_k^{E,N} =
\frac{\varexpbound}{\gamma^{E,N}_{t_k-}+\varexpbound} \in(0,1].
\end{equation}
Since $1-e^{-2\alpha (t-t_k)} \le 2\alpha (t-t_k) \le 2\alpha \Delta_N$ it follows from \eqref{proof43eq1}, \eqref{proof43eq2}
\begin{equation}\label{proof43eq3}
\gamma^{E,N}_{t}\le  \lambda_k^{E,N} \gamma^{E,N}_{t_k-} + \beta^2 \Delta_N \qquad \text{for } ~t\in  [t_k,t_{k+1}).
\end{equation}
Iterating this inequality and denoting $\overline \lambda_k^{E,N} := \max\limits_{j=0,\ldots,k} \lambda_j^{E,N}$ yields for $\nu_0 = \gamma_{0-}^{E,N}$
\begin{equation}
\label{proof43eq4}
\gamma_t^{E,N} ~~\le~~ \left(\overline \lambda_k^{E,N}\right)^{k+1} \nu_0 + \Delta_N \beta^2  \sum\limits_{j=0}^{k} \left(\overline \lambda_k^{E,N}\right)^j
~~\le~~  \left(\overline \lambda_k^{E,N}\right)^{k+1} \nu_0 + \frac{\beta^2 \Delta_N}{1-\overline \lambda_k^{E,N}}
\end{equation}
for $t\in  [t_k,t_{k+1})$.

Now, let $u\in(0,T]$ and $\eps>0$. We have to show that we can choose $N$ such that $\gamma_u^{E,N} < \eps$.
By $k_N$ we denote the index for which $u\in[t_{k_N}, t_{k_N+1})$.

Suppose that for all $N_0$ there exists $N\ge N_0$ such that
\begin{equation} \label{proof43ass}
\min\{\gamma_{t_0-}^{E,N}, \ldots,  \gamma_{t_{k_N}-}^{E,N}\}\ge \eps/2.
\end{equation}
Then we would get
$\overline \lambda^{E,N}_{k_N} \le  (\eps/2+\varexpbound)^{-1}\,{\varexpbound}$ and thus by \eqref{proof43eq4}
with one iterations less
\begin{equation}\label{proof43eq6}
\gamma_{t_{k_N}-}^{E,N} \le \left(  \frac {\varexpbound}{\eps/2+\varexpbound}\right)^{k_N} +
\frac{\beta^2(\eps+2 \varexpbound)}{\eps} \, {\Delta_N}
\end{equation}
Since the bound for $\overline \lambda^{E,N}_{k_N}$ is strictly less than 1, independent of $N$ and $k_N$ is increasing in $N$ with $k_N\to \infty$ for $N\to\infty$, the right hand side of \eqref{proof43eq6} is decreasing and converges to $0$ for $N\to\infty$. In particular, we can choose $N_0$ such that $\gamma_{t_{k_N}-}^{E,N} < \eps/2$. But this is a contradiction to the assumption in \eqref{proof43ass}.

Therefore, there exists an $N_0$ such that for all $N\ge N_0$ there exists some index $l_N\le k_N$ with
$\gamma_{t_{l_N-}}^{E,N} < \eps/2$. For each $N\ge N_0$ we choose $l_N$ as the maximal index $l\le k_N$ for which $\gamma_{t_l-}^{E,N}< \eps/2$,  i.e.~$t_{l_N}$ is the last information time before (or equal) ~$t_{k_N}$ where the conditional variance before the update is smaller than $\eps/2$.
If $l_N=k_N$ then \eqref{proof43eq3} with $k=k_N, t=u$ implies that for $N$ large enough we have $\gamma_u^{E,N} \le \eps$ and the claim follows.\\
Otherwise, i.e.~if $l_N<k_N$ then we have for  $k= l_N+1, \ldots, k_N$ that $\gamma_{t_k}^{E,N} \ge \eps/2$ and thus $\lambda^{E,N}_{k} \le  (\eps/2+\varexpbound)^{-1}\,{\varexpbound}$ as above.
We choose $N_1\ge N_0$  such that $\beta^2(\eps+2 \varexpbound){\Delta_{N_1}}/{\eps} < \eps/2$.
An iteration as in \eqref{proof43eq4} starting with initial time $s=t_{l_N}$ and initial value $\gamma_{s}^{E,N}<\eps/2$ (instead of $0$ and $\nu_0$)  and using the upper bound $ (\eps/2+\varexpbound)^{-1}\,{\varexpbound}$ for $\lambda_k^{E,N}$ as in \eqref{proof43eq6} yields finally for all $N\ge N_1$
 \[
 \gamma_u^{E,N} \le  \left(\frac {\varexpbound}{\eps/2+\varexpbound}\right)^{k_N-l_N+1}\frac\eps{2}
 + \frac{\beta^2(\eps+2 \varexpbound)}{\eps} \, {\Delta_N} < \frac\eps{2} + \frac\eps{2} = \eps.
\]
\end{proof}

\begin{remark} \label{vanishingCondVar} Note that for full information we have $\gamma^F_t = 0$ for all $t\in(0,T]$. So Proposition \ref{prop_cond_var_asymN} shows that
$\gamma_t^{E,N}$ and $\gamma_t^{C,N}$ converge to $\gamma_t^F$ for increasing the number of expert opinions. In particular this shows that we gain full information in the limit. More precisely, by increasing the number of expert opinions we get an arbitrarily sharp estimate of $\mu_t$.
\end{remark}

For Proposition \ref{prop_cond_var_asym_t} we need for the existence of the limits some monotonicity properties of the conditional variances.

\begin{lemma}\label{monotonicity}
Between the informations dates, i.e.\ for $t\in [t_k, t_{k+1})$, we have
\begin{enumerate}
\item[(i)] for $H\in\{R, C\}$ that $\gamma_t^H$ is decreasing, if $\gamma_{t_k}^H > C_0-\alpha\,\sigma^2$, and increasing, if $\gamma_{t_k}^H < C_0-\alpha\,\sigma^2$, where $C_0$ is given in \eqref{solution_Riccatti},
\item[(i)] and for $H=E$ that $\gamma_t^E$ is decreasing, if $\gamma_t^E > \frac{\beta^2}{2\alpha}$, and increasing, if $\gamma_t^E < \frac{\beta^2}{2\alpha}$.
\end{enumerate}
\end{lemma}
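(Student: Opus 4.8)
The plan is to exploit that in each regime $\gamma_t^H$ solves, on the interval $[t_k,t_{k+1})$, an \emph{autonomous} scalar ODE $\frac{d}{dt}\gamma_t^H = f^H(\gamma_t^H)$, so that the sign of the derivative along a trajectory is determined purely by the sign of $f^H$ there, and the latter cannot change in $[t_k,t_{k+1})$ because a solution of an autonomous scalar ODE with a locally Lipschitz right-hand side cannot cross an equilibrium point (uniqueness). Thus, in particular, the dichotomy stated in terms of the current value of $\gamma_t^H$ is equivalent to the one stated in terms of the initial value $\gamma_{t_k}^H$.

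First, for $H=E$ we have by Remark \ref{remark_EC} that $f^E(y)=-2\alpha y+\beta^2$, which vanishes precisely at $y=\frac{\beta^2}{2\alpha}$, is positive for $y<\frac{\beta^2}{2\alpha}$ and negative for $y>\frac{\beta^2}{2\alpha}$. Since $\frac{\beta^2}{2\alpha}$ is a stationary point, a trajectory started strictly above (resp.\ strictly below) $\frac{\beta^2}{2\alpha}$ stays strictly above (resp.\ below) it throughout $[t_k,t_{k+1})$; hence $\frac{d}{dt}\gamma_t^E$ has constant sign there, which is the assertion in (ii).

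Next, for $H\in\{R,C\}$ we have $f^R(y)=f^C(y)=-\sigma^{-2}y^2-2\alpha y+\beta^2$, a downward-opening parabola in $y$. Solving $f^H(y)=0$, i.e.\ $\sigma^{-2}y^2+2\alpha y-\beta^2=0$, gives the two roots $y_\pm=-\alpha\sigma^2\pm C_0$ with $C_0=\sigma\sqrt{\sigma^2\alpha^2+\beta^2}$ as in \eqref{solution_Riccatti}; since $C_0^2=\sigma^4\alpha^2+\sigma^2\beta^2>\sigma^4\alpha^2$ we have $C_0>\alpha\sigma^2$, so $y_-=-\alpha\sigma^2-C_0<0<C_0-\alpha\sigma^2=y_+$. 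Consequently $f^H(y)>0$ for $y\in(y_-,y_+)$ and $f^H(y)<0$ for $y>y_+$. Both $y_-$ and $y_+$ are equilibria of the ODE, so a trajectory started at $\gamma_{t_k}^H\ge 0$ either remains in $(y_-,y_+)$ (when $\gamma_{t_k}^H<y_+$, using $y_-<0\le\gamma_{t_k}^H$) or remains in $(y_+,\infty)$ (when $\gamma_{t_k}^H>y_+$) for all $t\in[t_k,t_{k+1})$; in the first case $\frac{d}{dt}\gamma_t^H=f^H(\gamma_t^H)>0$ and in the second it is $<0$, which gives (i).

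There is essentially no serious obstacle here: the only points requiring care are identifying the positive root of the Riccati right-hand side with $C_0-\alpha\sigma^2$, verifying $C_0>\alpha\sigma^2$ so that the roots are ordered as claimed, and invoking uniqueness to pin down the sign. (Alternatively one could differentiate the explicit solution \eqref{cond_var_C0} directly, but the phase-line argument is shorter and handles $H=R$ and $H=C$ uniformly.)
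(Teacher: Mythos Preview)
Your argument is correct and follows essentially the same route as the paper: both analyze the sign of the right-hand side of the governing autonomous ODE (the Riccati equation \eqref{cond_var_R} for $H\in\{R,C\}$, and the linear ODE of Remark \ref{remark_EC} for $H=E$), though the paper handles $H=E$ by differentiating the explicit formula \eqref{cond_var_E0} instead. Your version is in fact a bit more careful, since you explicitly identify the positive root as $C_0-\alpha\sigma^2$, verify $C_0>\alpha\sigma^2$, and invoke uniqueness to ensure the sign of $f^H(\gamma_t^H)$ cannot change on $[t_k,t_{k+1})$.
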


\begin{proof}
The results follow for $H\in\{R, C\}$ from the dynamics by checking the sign of the right hand side in \eqref{cond_var_R} which also $\gamma_t^C$ satisfies and which led to \eqref{cond_var_C0}. For $H=E$ we take the derivative of \eqref{cond_var_E0} w.r.t. $t$ and check the sign.
\end{proof}

For the following proposition we consider an infinite time horizon $T=\infty$. Note that the filtering equations in Lemma \ref{filter_E} and Lemma \ref{filter_C} up to each $t$ remain valid. It turns out that the asymptotic bounds, which we derive for the conditional variance for an infinite time horizon and with equidistant information times, give quite accurate approximations also for the finite horizon case, see e.g.\ Figure \ref{cond_var_bounds}.
\begin{figure}[h]
\ifpdf \hspace*{-5mm}
\includegraphics[width=125mm,height=60mm]{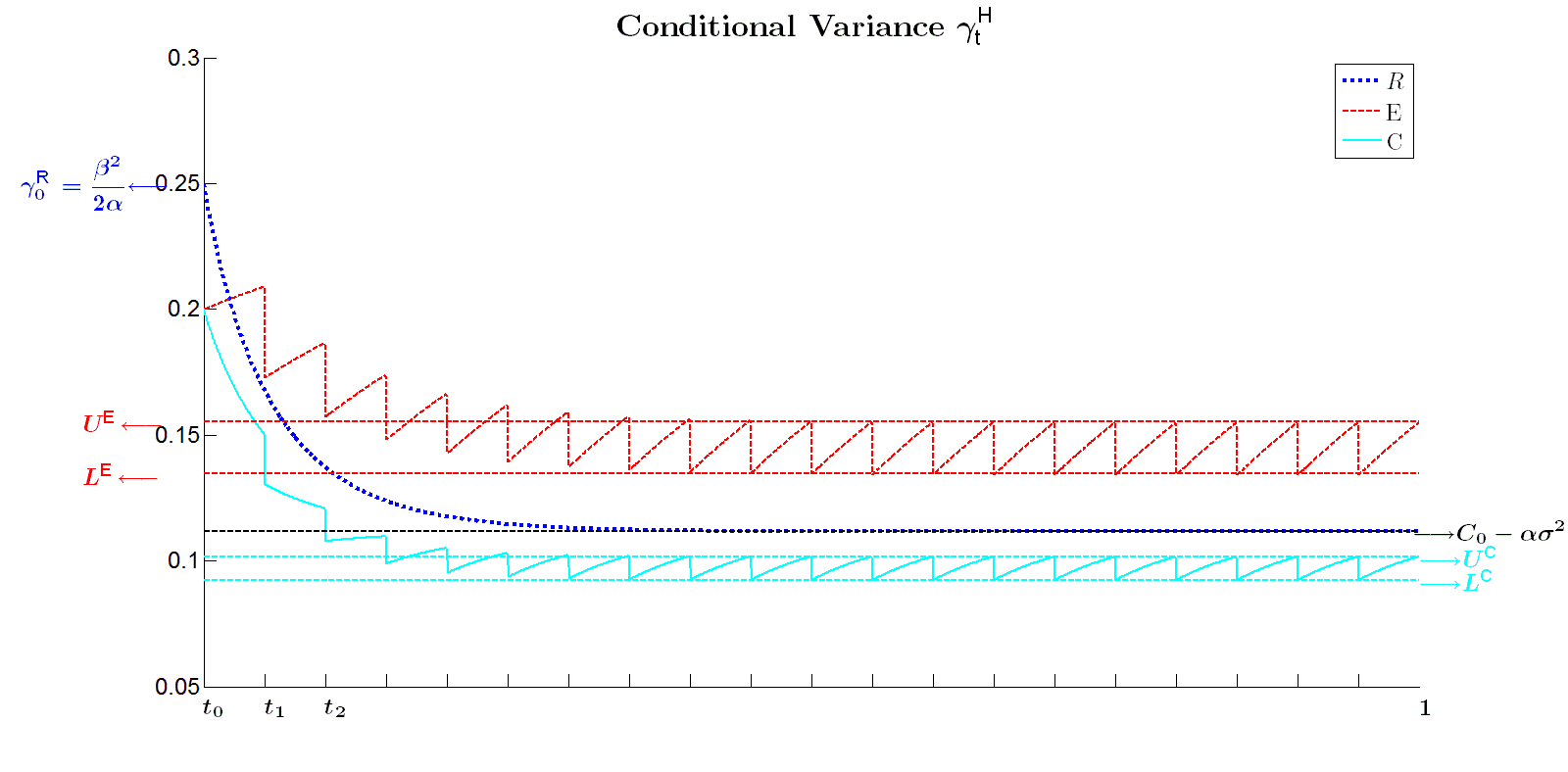}%
\vspace*{-4ex}
 \else \hspace*{-5mm}
\includegraphics[width=125mm,height=60mm]{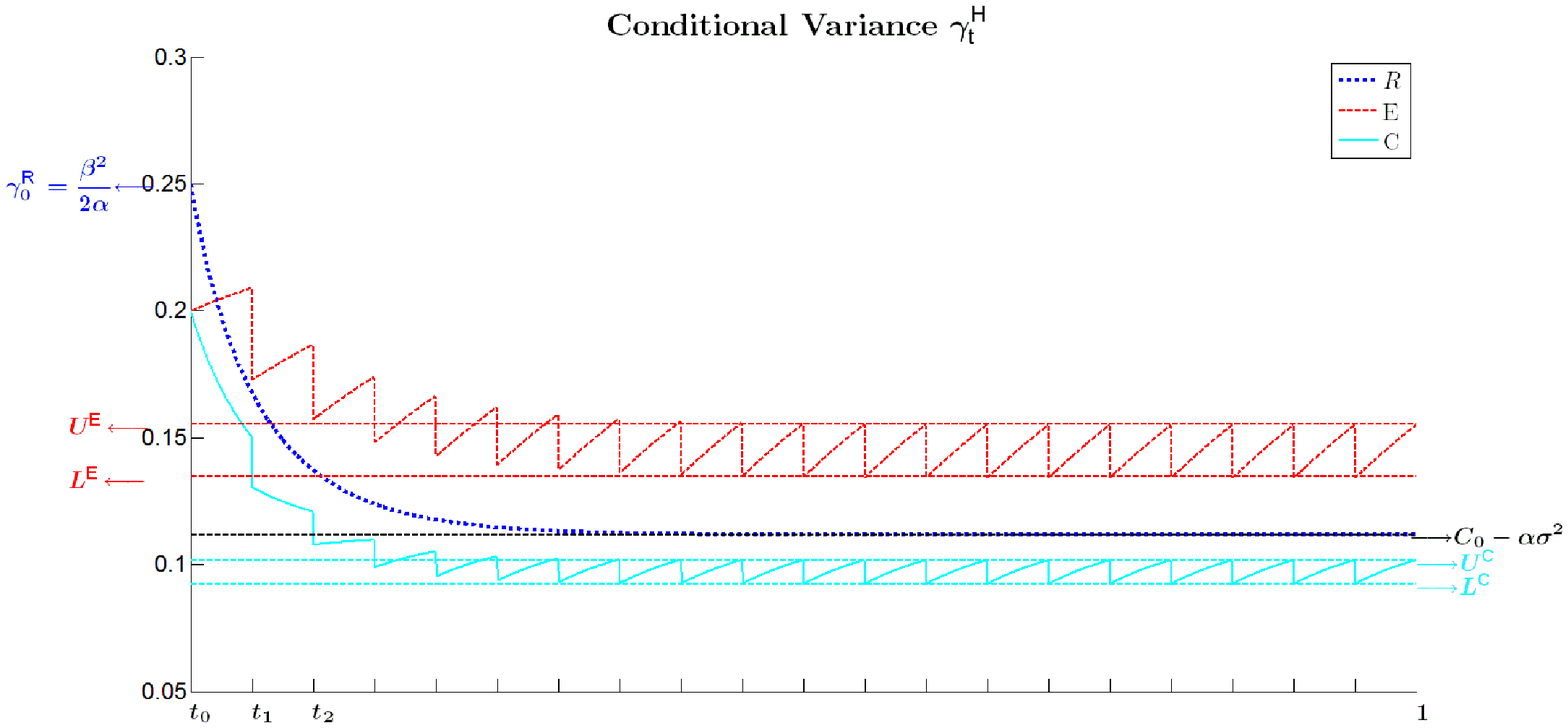}%

\vspace*{-4ex}
\fi
 \centering \caption{\label{cond_var_bounds}
 Asymptotic behaviour of the conditional variance $\gamma_t^H$ for $t\to\infty$, $H=R,E,C$\newline
 Parameters:  $\alpha=2, \beta=1, \sigma=0.15, \Gamma=1, N=20$
 }

\end{figure}

\begin{proposition}
\label{prop_cond_var_asym_t}\ \textbf{Asymptotics for $t\to \infty$}\\
Consider the model as above but with an infinite time horizon $T=\infty$ and assume that the expert opinions arrive at equidistant information dates $t_{k} = t_{k-1} + \Delta$ with some $\Delta>0$. First, without expert opinions ($H=R$) we have
\begin{equation}
\label{cond_var_asym_t_R}
\gamma^R_\infty  := \lim\limits_{t\to\infty} \gamma_t^{R} = C_0-\alpha\sigma^2,
\end{equation}
where $C_0=\sigma\sqrt{\sigma^2\alpha^2+\beta^2}$ as in \eqref{solution_Riccatti}.\\
For $H=E,C$ let $t_k=k\Delta$  and $\varexp_k=\varexp>0$ for $k=0,1,\ldots$.
Then it holds for the conditional variances $\gamma_t^{H}$
\begin{equation}
\label{cond_var_asym_t_EC}
\lim\sup\limits_{t\to\infty} \gamma_t^{H} = U^{H} \quad \text{and}\quad
\lim\inf\limits_{t\to\infty} \gamma_t^{H} = L^{H},
\end{equation}
\begin{eqnarray*}
\text{where}\quad U^H &=& \frac{1}{2a^H}\Big(-b^H + \sqrt{(b^H)^2-4a^Hc^H}\;\Big)\quad \text{and}\quad L^H = \frac{\varexp U^H }{\varexp+ U^H}\quad \text{with}
 \end{eqnarray*}
 \[\begin{array}{rlrl}
 a^E &= 1, &        a^C&= \frac{1}{2\alpha\sigma^2}\big((1-d^C)(\varexp+\alpha\sigma^2) +(1+d^C)C_0\big),\\[1ex]
 b^H &= -(1-d^H)\Big( \frac{\beta^2}{2\alpha}-\varexp\Big),&
 c^H &= -(1-d^H) \frac{\beta^2}{2\alpha}\varexp, \quad H\in \{E,C\},\\[1ex]
  d^E &= \exp(-2\alpha \Delta), & d^C &= \exp(-2C_0\sigma^{-2}\Delta).
\end{array}
\]
\end{proposition}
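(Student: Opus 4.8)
The plan is to reduce every case to a scalar recursion for the conditional variance sampled at the information dates. The case $H=R$ is immediate from the closed form \eqref{solution_Riccatti}: since $C_0>0$ the term $e^{-2C_0\sigma^{-2}t}$ tends to $0$ and $C_1=\nu_0+\alpha\sigma^2+C_0>0$, so $\gamma_t^R\to-\alpha\sigma^2+C_0$, which is \eqref{cond_var_asym_t_R}. For $H\in\{E,C\}$ I introduce $x_k:=\gamma^H_{t_k}$ (the value right after the $k$-th update) and $y_k:=\gamma^H_{t_k-}$ (the value just before it). Propagating over one interval of length $\Delta$ gives, by Lemma \ref{filter_E} resp.\ Lemma \ref{filter_C}, $y_{k+1}=\phi^H(x_k)$ where $\phi^E(x)=d^E x+(1-d^E)\frac{\beta^2}{2\alpha}$ is affine and $\phi^C(x)=-\alpha\sigma^2+C_0\frac{(1+d^C)(x+\alpha\sigma^2)+(1-d^C)C_0}{(1-d^C)(x+\alpha\sigma^2)+(1+d^C)C_0}$ is a M\"obius transformation; in both cases $\phi^H$ is strictly increasing on $[0,\infty)$ with bounded range (for $\phi^C$ because it equals the Kalman variance after time $\Delta$, hence is nonnegative and bounded). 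The update \eqref{cond_var_E_update} gives $x_{k+1}=\varexp\,y_{k+1}/(y_{k+1}+\varexp)$, so $y_{k+1}=g^H(y_k)$ with $g^H(y):=\phi^H\!\big(\varexp y/(y+\varexp)\big)$, which is again strictly increasing on $[0,\infty)$.

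The first real task is to show $g^H$ has a unique fixed point $U^H\in(0,\infty)$ and that $y_k\to U^H$. Clearing the (strictly positive) denominators in $g^H(y)=y$ should produce exactly the quadratic $a^Hy^2+b^Hy+c^H=0$ with the coefficients in the statement; for $H=C$ I would streamline this by writing the Riccati step in the form $\frac{y_{k+1}-r_+}{y_{k+1}-r_-}=d^C\,\frac{x_k-r_+}{x_k-r_-}$ with $r_\pm=-\alpha\sigma^2\pm C_0$, substituting $x_k=\varexp y_k/(y_k+\varexp)$, and using the identity $C_0^2=\alpha^2\sigma^4+\sigma^2\beta^2$ to collapse terms. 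Since $a^H>0$ and $c^H<0$, the quadratic has one positive root, namely $U^H=\frac{1}{2a^H}\big(-b^H+\sqrt{(b^H)^2-4a^Hc^H}\big)$, and one negative root; consequently on $[0,\infty)$ the sign of $g^H(y)-y$ equals the sign of $-(a^Hy^2+b^Hy+c^H)$, i.e.\ $g^H(y)>y$ on $[0,U^H)$ and $g^H(y)<y$ on $(U^H,\infty)$. Monotonicity of $g^H$ then makes $(y_k)$ monotone and bounded by $U^H$ (increasing if $y_0=\nu_0<U^H$, decreasing if $y_0>U^H$, constant if $y_0=U^H$), so $y_k\to U^H$ and hence $x_k=\varexp y_k/(y_k+\varexp)\to L^H=\varexp U^H/(\varexp+U^H)$.

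It then remains to pass from the sampled sequences to $\limsup$ and $\liminf$ over all $t$. Between consecutive information dates $\gamma_t^H$ solves an autonomous ODE (Remark \ref{remark_EC} for $H=E$, the Riccati equation for $H=C$) and, by Lemma \ref{monotonicity}, moves monotonically from $x_k$ towards the stationary level $\ell^H$, where $\ell^E=\frac{\beta^2}{2\alpha}$ and $\ell^C=C_0-\alpha\sigma^2=\gamma^R_\infty$, attaining $y_{k+1}$ as its left limit at $t_{k+1}$. Thus $y_{k+1}$ always lies between $x_k$ and $\ell^H$; letting $k\to\infty$, $U^H$ lies between $L^H$ and $\ell^H$, and since $L^H<U^H$ this forces $L^H<U^H\le\ell^H$. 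Hence for all large $k$ the whole trajectory on $[t_k,t_{k+1})$ is increasing from $x_k$ up to $y_{k+1}$ and stays in $[x_k,y_{k+1}]$, so $\sup_{t\ge t_k}\gamma_t^H=\sup_{j\ge k}y_{j+1}\to U^H$ and $\inf_{t\ge t_k}\gamma_t^H=\inf_{j\ge k}x_j\to L^H$, which is \eqref{cond_var_asym_t_EC}.

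The conceptual skeleton (a M\"obius recursion with an attracting fixed point, plus a monotone-convergence argument) is clean, so I expect the only place where one can really go wrong — and the main obstacle — to be the algebraic reduction of the fixed-point equation for $H=C$ to the stated quadratic, which mixes the Riccati M\"obius map with the Bayesian update and genuinely relies on $C_0^2=\alpha^2\sigma^4+\sigma^2\beta^2$ to take the claimed form; a secondary point needing care is the inequality $U^H\le\ell^H$, used to guarantee that the trajectory between information dates is eventually monotone and does not overshoot $y_{k+1}$, which I would obtain exactly as above from Lemma \ref{monotonicity} together with $L^H<U^H$.
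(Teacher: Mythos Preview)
Your proof is correct and follows the same overall route as the paper: reduce to the one-step map on the pre-update variances $y_{k+1}=g^H(y_k)$ with $g^H=\phi^H\circ\big(y\mapsto \varexp y/(y+\varexp)\big)$, identify its fixed point via the quadratic $a^Hy^2+b^Hy+c^H=0$, and then pass from the sampled sequences $(y_k),(x_k)$ to $\limsup/\liminf$ over all $t$ using Lemma~\ref{monotonicity}.

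The one place where you and the paper genuinely differ is in how convergence of $(y_k)$ is obtained. The paper first runs a separate preliminary argument (bounding $\lambda_k^H\le\Lambda^H<1$ and iterating) to force the trajectory below the monotonicity threshold $\ell^H$, and only then asserts---without detailed proof---that $(\gamma^H_{t_k-})_k$ and $(\gamma^H_{t_k})_k$ are monotone, so that the limits exist and can be read off from the fixed-point equations. You instead first write down the quadratic, observe that $a^H>0$, $c^H<0$ pin down a unique positive root $U^H$, and use the sign of $-(a^Hy^2+b^Hy+c^H)$ together with monotonicity of $g^H$ to get a self-contained monotone-convergence argument for $(y_k)$; the inequality $U^H\le\ell^H$ (hence eventual increase between information dates) then drops out afterwards from $L^H<U^H$ and the fact that $y_{k+1}$ always lies between $x_k$ and $\ell^H$. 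Your ordering is tighter: it makes the monotonicity of the sampled sequence explicit rather than asserted, and it avoids the paper's preparatory threshold-crossing step altogether.
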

\begin{proof}
The proof of \eqref{cond_var_asym_t_R} follows immediately from representation \eqref{solution_Riccatti} for $\gamma^R_t$.

For the proof of assertion \eqref{cond_var_asym_t_EC} we observe that
there exists some index $k_0$ such that for $k\ge k_0$ the conditional variance $\gamma_t^H$ is increasing between two information dates $t_k$ and $t_{k+1}$. To prove this, note that
\[
\lambda^E_k\le \Lambda^E:= \frac{\Gamma}{\beta^2+\Gamma} <1 \quad \mbox{ if } \quad \gamma_{t_k-}^E \ge \beta^2
\]
and
\[
\lambda^C_k\le \Lambda^C := \frac{\Gamma}{C_0-\alpha \sigma^2 +\Gamma} <1 \mbox{ if } \gamma_{t_k-}^C \ge C_0-\alpha \sigma^2.
\]
Since by Lemma \ref{monotonicity} $\gamma_t^E$ and $\gamma_t^C$ are decreasing on $[t_k, t_{k+1})$ as long as they lie above these boundaries $\beta^2$ and $C_0-\alpha \sigma^2$, respectively, iterating the updating formulas in Lemma \ref{filter_E} and Lemma \ref{filter_C} yields
\[
\gamma_{t_k}^H = \gamma_{t_k-}^H \lambda_k^H \le \gamma_{t_{k-1}}^H \Lambda^H \le \ldots \le \nu_0 \left(\Lambda^H\right)^{k+1}, \quad H= E, C.
\]
Since $\Lambda^H <1$, $\gamma_t^H$ finally falls below the corresponding bound and with a similar argument as above then stays below this boundary. By Lemma \ref{monotonicity}, $\gamma_t^H$ is increasing between the information dates below this boundary and thus a $k_0$ as stated above can be found in both cases.

Moreover, $\gamma_t^H$ is bounded from below by $0$,  and hence for $t\ge t_{k_0}$ we have $\gamma_t^E\in[0,\beta^2]$ and $\gamma_t^C\in[0,C_0- \alpha \sigma^2]$. One can further show that the sequences $(\gamma^H_{t_k-})_k$ and $(\gamma^H_{t_k})_k$ are either decreasing  or increasing (the latter when starting with small $\nu_0$). Therefore, the limits exist and we have
$$U^H= \lim\sup\limits_{t\to\infty} \gamma_t^{H} = \lim\limits_{k\to\infty} \gamma_{t_k-}^{H}\quad \text{and}\quad L^H= \lim\inf\limits_{t\to\infty} \gamma_t^{H} = \lim\limits_{k\to\infty} \gamma_{t_k}^{H}.$$

For $H=E$, Lemma \ref{filter_E} yields for the conditional variances before the update at the information dates $t_{k+1}$,    $k=0,1,\ldots$,
\[\gamma_{t_{k+1}-}^E  =  G^E(\gamma_{t_{k}}^E)\quad  \text{with }\quad  G^E(x) :=d^E \;x  +(1-d^E)\;\frac{\beta^2}{2\alpha}\]
where $d^E=e^{-2\alpha \Delta}$, and for $H=C$ Lemma \ref{filter_C} yields
\[\gamma^C_{t_{k+1}-} = -\alpha\sigma^2+C_0\;\frac{C_{1k}+C_{2k}\,e^{-2C_0\sigma^{-2}\Delta}}{C_{1k}-C_{2k}\,e^{-2C_0\sigma^{-2}\Delta}} \quad \text{where}\quad C_{1/2,k}=\gamma_{t_k}^C+\alpha\sigma^2\pm C_0.\]
Hence, $\gamma^C_{t_{k+1}-}$ can be expressed as
\[
\gamma^C_{t_{k+1}-} = G^C(\gamma_{t_{k}}^C)
~\text{where }~~ G^C(x)  :=  -\alpha\sigma^2 + C_0\frac{(1+d^C)( x  + \alpha\sigma^2) + C_0(1- d^C)}{(1-d^C) (x + \alpha\sigma^2) + C_0(1+ d^C)}
\]
and $d^C=e^{-2C_0\sigma^{-2}\Delta}$.
Since the limits $U^H$ and $L^H$ exist, we can substitute in the above equations for $\gamma_{t_{k+1}-}^H$ and in the updating formula $\gamma_{t_k}^H  = \lambda^H_k \gamma^H_{t_k-} = ({\gamma^H_{t_k-}+\varexp})^{-1} \gamma^H_{t_k-}\,\varexp$, given in \eqref{cond_var_E_update} and \eqref{updatingC}, first $U^H$ for $\gamma^H_{t_k-}$ and $\gamma^H_{t_{k+1}-}$ and second $L^H$ for $\gamma^H_{t_k}$ to compute the limits. Therefore,
 $U^H$ and $L^H$ satisfy
\[U^H=G^H(L^H) \quad \text{ and }\quad L^H= \frac{U^H\varexp}{U^H+\varexp}.\]
Substituting the second into the first equation yields after some algebra the  quadratic equation
$a^H(U^H)^2 +b^H U^H + c^H = 0 $ for $U^H$
with coefficients $a^H,b^H,c^H$  given in the proposition.
Since $0<d^H<1$ it holds  $a^H>0$ and $c^H<0$ and  $|b^H|<((b^H)^2-4a^Hc^H)^{1/2}$, hence there is  one negative and one positive real solution. We are only interested in the positive solution which is given by $U^H=\frac{1}{2a^H}(-b^H + ((b^H)^2-4a^Hc^C)^{1/2})$ yielding the expression in the proposition. The expression for $L^H$ follows from the updating formula.
\end{proof}

A detailed look at the formulas for $U^H$ and $L^H$ in Proposition \ref{prop_cond_var_asym_t} reveals:

\begin{corollary}
Under the conditions of Proposition  \ref{prop_cond_var_asym_t} for $H\in\{E,C\}$ the limits
$U^H$ and $L^H$ are decreasing in $\Delta$ with
\[
\lim_{\Delta\to 0} U^H = \lim_{\Delta \to 0}  L^H = 0.
\]
Further, $U^E\le \frac{\beta^2}{2\alpha}$ and $U^C\le \gamma^R_\infty$.
\end{corollary}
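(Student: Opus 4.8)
The plan is to read everything off the fixed-point description obtained in the proof of Proposition~\ref{prop_cond_var_asym_t}: for $H\in\{E,C\}$ the number $U^H$ is the unique strictly positive fixed point of the map $\Phi^H_\Delta(x):=G^H_\Delta\big(\tfrac{\varexp x}{\varexp+x}\big)$, where $G^H_\Delta$ denotes the time-$\Delta$ flow of the autonomous scalar ODE $\dot\gamma=f^H(\gamma)$ with $f^E(\gamma)=-2\alpha\gamma+\beta^2$ and $f^C(\gamma)=-\sigma^{-2}\gamma^2-2\alpha\gamma+\beta^2$, while $L^H=\tfrac{\varexp U^H}{\varexp+U^H}$. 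Write $\gamma^E_\ast:=\tfrac{\beta^2}{2\alpha}$ and $\gamma^C_\ast:=C_0-\alpha\sigma^2=\gamma^R_\infty$ for the unique positive zeros of $f^E$ and $f^C$. Since $f^H>0$ strictly below $\gamma^H_\ast$ and $f^H<0$ strictly above it, the flow $G^H_\Delta$ is order preserving in its initial value, leaves $[0,\gamma^H_\ast]$ invariant, and, started from a value strictly below $\gamma^H_\ast$, is strictly increasing in $\Delta$ (towards $\gamma^H_\ast$).

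I would prove the two bounds first, because the monotonicity argument uses them. Switching the expert variances off, the conditional variance is exactly the flow $\widetilde\gamma_t=G^H_t(\nu_0)$, which is monotone towards $\gamma^H_\ast$ and hence never exceeds $\max\{\nu_0,\gamma^H_\ast\}$; reinserting the updates at the dates $t_k$ can only decrease the value (the update $x\mapsto\tfrac{\varexp x}{\varexp+x}$ lies below the identity, and between dates the $f^H$-flow is order preserving), so $\gamma^H_t\le\widetilde\gamma_t$ for all $t$ and therefore $U^H=\limsup_{t\to\infty}\gamma^H_t\le\gamma^H_\ast$. For $H=C$ one may instead combine $\gamma^C_t\le\gamma^R_t$ from Proposition~\ref{prop_cond_var} with the monotone convergence $\gamma^R_t\to\gamma^R_\infty$ from Lemma~\ref{monotonicity}. (An algebraic alternative: evaluate the quadratic $a^H y^2+b^H y+c^H$ of Proposition~\ref{prop_cond_var_asym_t} at $y=\gamma^H_\ast$; using $C_0^2=\sigma^2(\sigma^2\alpha^2+\beta^2)$ one checks the value is $\ge 0$, and since the parabola opens upward and vanishes at its larger root $U^H$, the bound follows, with strict inequality for $\Delta<\infty$.)

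For the monotonicity fix $0<\Delta_1<\Delta_2$ and put $U_i:=U^H(\Delta_i)$. By the bound just proved, $L^H(\Delta_1)=\tfrac{\varexp U_1}{\varexp+U_1}<U_1\le\gamma^H_\ast$, so the input $L^H(\Delta_1)$ lies strictly below the rest point $\gamma^H_\ast$, whence $\Phi^H_{\Delta_2}(U_1)=G^H_{\Delta_2}(L^H(\Delta_1))\ge G^H_{\Delta_1}(L^H(\Delta_1))=U_1$. Since $\Phi^H_{\Delta_2}$ is continuous, increasing, and maps $[0,\gamma^H_\ast]$ into itself, iterating it from $U_1$ yields a nondecreasing sequence bounded by $\gamma^H_\ast$; its limit is a positive fixed point of $\Phi^H_{\Delta_2}$, hence equals $U_2$, so $U_2\ge U_1$. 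Thus $U^H$ is monotone (nondecreasing) in $\Delta$, and $L^H$ inherits this since $x\mapsto\tfrac{\varexp x}{\varexp+x}$ is increasing; this direction is forced anyway by the limit $\Delta\to0$, which is immediate from the explicit formula: there $1-d^H\to0$, so $b^H,c^H\to0$ while $a^H$ tends to a positive constant ($a^E=1$, $a^C\to C_0/(\alpha\sigma^2)$), hence $U^H=\tfrac{-b^H+\sqrt{(b^H)^2-4a^Hc^H}}{2a^H}\to0$ and $0\le L^H\le U^H\to0$, consistent with Remark~\ref{vanishingCondVar}. Strict monotonicity, if wanted, comes from the implicit function theorem applied to $a^H(U^H)^2+b^HU^H+c^H=0$: at the positive root $\partial_y(a^Hy^2+b^Hy+c^H)=\sqrt{(b^H)^2-4a^Hc^H}>0$, while a short computation (simplifying via $a^H(U^H)^2=-b^HU^H-c^H$) gives $\partial_\Delta$ of the same expression $<0$, so $dU^H/d\Delta>0$.

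The step that needs the most care is getting the order of the argument right: the comparison giving monotonicity in $\Delta$ relies on already knowing that the input $L^H$ to the flow sits strictly below the rest point $\gamma^H_\ast$, which is precisely the content of the bounds, so these must come first. For $H=C$ this in turn rests on recognising $\gamma^R_\infty=C_0-\alpha\sigma^2$ as the rest point of the Riccati flow $f^C$; the algebraic route through the quadratic is doable but unpleasant, so I would present the flow-comparison argument and keep the direct quadratic evaluation only as a remark.
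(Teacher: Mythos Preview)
The paper itself gives no proof of this corollary; it simply asserts that ``a detailed look at the formulas for $U^H$ and $L^H$ in Proposition~\ref{prop_cond_var_asym_t} reveals'' the claim. Your argument is therefore not a reproduction of the paper's route but a genuine proof where the paper offers none, and it is correct.

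A few remarks. First, and most importantly, you prove that $U^H$ and $L^H$ are \emph{increasing} in $\Delta$, whereas the stated corollary says ``decreasing''. Your direction is the right one: since $U^H>0$ for $\Delta>0$ and $U^H\to0$ as $\Delta\to0$, the map $\Delta\mapsto U^H$ cannot be decreasing; the paper's wording is a slip. You flag this implicitly (``this direction is forced anyway by the limit $\Delta\to0$''), but it is worth saying explicitly that the statement as printed is misworded.

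Second, your strategy---derive the bounds $U^H\le\gamma^H_\ast$ by comparing with the update-free flow, then use the fixed-point characterisation $U^H=\Phi^H_\Delta(U^H)$ together with the monotonicity of $G^H_\Delta$ in $\Delta$ below $\gamma^H_\ast$ to get monotonicity of $U^H$---is cleaner and more informative than grinding through the explicit quadratic. It also makes transparent why the bounds must be established first (so that the flow-comparison step applies), which you rightly emphasise. The limit $\Delta\to0$ you read off the formulas exactly as the paper presumably intends. The implicit-function argument for strict monotonicity is a nice bonus; the only thing one might add, for completeness, is a one-line justification that the iterates of $\Phi^H_{\Delta_2}$ started at $U_1$ do converge to the \emph{unique} positive fixed point $U_2$ (you use this, and it follows from the quadratic having a single positive root, as shown in the proof of Proposition~\ref{prop_cond_var_asym_t}).
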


\section{Portfolio Optimization Problem}
\label{optimization}

Now that we have the filtering results at hand, it is quite straightforward to compute the optimal strategy and explicit representations for the value functions for our four cases of $H$ for logarithmic utility. This illustrates the influence of the expert opinions.

We describe the
self-financing trading of an investor by the initial capital $x_0>0$
and the  $\mathbb{F}$-adapted trading strategy $\pi=(\pi_t)_{t\in[0,T]} $
where $\pi_t\in\R$ represents  the proportion of wealth invested in stocks  at
time $t$.  It is well-known that in this setting the wealth process
$X^{\pi}$ of the portfolio has the  dynamics
\begin{eqnarray} \label{wealth_phys}
\frac{dX_t^{\pi}}{X_t^{\pi}}= \pi_t\frac{dS_t}{S_t} &= & \pi_t\mu_t
dt+\pi_t\sigma dW_t,\quad X_0^{\pi}=x_0.
\end{eqnarray}
We denote by
\begin{equation*}
\mathcal{A}^H=\{\pi= (\pi_t)_{t\in[0,T]}  \colon \text{
$\pi$ is $\mathbb{F}^H$-adapted, } X^\pi_t \ge 0,  t\in[0,T],  E[\int_0^T \pi_t^2\, dt ]<\infty \}
\end{equation*}%
the  class of {\em admissible trading strategies}, where $H\in \{R,E,C,F\}$.

We assume that the investor wants to maximize the expected logarithmic utility
of terminal wealth. The optimization problem thus reads
\begin{eqnarray}
\label{opti_org}
V^H(x_0) =
\sup\{ E[\log (X_T^{\pi})] \colon {\pi\in\mathcal{A}^H}\},
\end{eqnarray}
where $V^H(x_0)$ is called the value of the optimization problem for given initial capital $x_0$.
This is a maximization problem under partial information since  we
have required that the strategy $\pi$ is adapted to the investor
filtration  $\mathbb{F}$. In particular we are interested in an optimal strategy $\pi^*=\pi^{*,H}$ which attains the optimal value, i.e., $V^H(x_0) = E[\log(X_T^{\pi^*})]$.

\begin{proposition} \label{lemma:h_opti_log}
The optimal strategy for problem (\ref{opti_org}) is
\[
\pi_t^* = \sigma^{-2}\,\widehat{\mu}_t^H, \quad t\in[0,T].
\]
\end{proposition}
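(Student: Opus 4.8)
The plan is to exploit the special structure of logarithmic utility, which makes the optimization problem "myopic": the log of terminal wealth decomposes into an integral whose integrand at each time $t$ depends only on $\pi_t$ and the current state, so pointwise optimization of the integrand yields the globally optimal strategy. Concretely, I would first apply It\^o's formula to $\log X_t^\pi$ using the wealth dynamics \eqref{wealth_phys}. This gives
\[
\log(X_T^\pi) = \log x_0 + \int_0^T \Big(\pi_t\mu_t - \tfrac12\pi_t^2\sigma^2\Big)\,dt + \int_0^T \pi_t\sigma\,dW_t.
\]
The next step is to argue that the stochastic integral term has zero expectation: by the admissibility condition $E[\int_0^T\pi_t^2\,dt]<\infty$ together with boundedness of $\sigma$, the process $\int_0^\cdot \pi_s\sigma\,dW_s$ is a true (square-integrable) martingale, so $E[\int_0^T\pi_t\sigma\,dW_t]=0$. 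Hence
\[
E[\log(X_T^\pi)] = \log x_0 + E\!\left[\int_0^T \Big(\pi_t\mu_t - \tfrac12\pi_t^2\sigma^2\Big)\,dt\right].
\]

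The key step is then to condition on the investor filtration. Since $\pi$ is $\mathbb{F}^H$-adapted, for each $t$ we have (by Fubini, justified by integrability) $E[\pi_t\mu_t] = E[E[\pi_t\mu_t\mid \mathcal{F}_t^H]] = E[\pi_t\,\widehat\mu_t^H]$, using the tower property and the definition $\widehat\mu_t^H=E[\mu_t\mid\mathcal{F}_t^H]$. Therefore
\[
E[\log(X_T^\pi)] = \log x_0 + E\!\left[\int_0^T \Big(\pi_t\widehat\mu_t^H - \tfrac12\pi_t^2\sigma^2\Big)\,dt\right].
\]
Now the integrand is, for each fixed $\omega$ and $t$, a concave quadratic in the scalar $\pi_t$, maximized at $\pi_t = \sigma^{-2}\widehat\mu_t^H$. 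Plugging this candidate in: it is $\mathbb{F}^H$-adapted (the filter is $\mathbb{F}^H$-adapted by construction in Section \ref{partial_info}), and it satisfies the admissibility conditions — positivity of wealth is automatic since the wealth SDE has a geometric-Brownian-type solution that stays positive, and the square-integrability $E[\int_0^T (\sigma^{-2}\widehat\mu_t^H)^2\,dt]<\infty$ follows from Lemma \ref{filter_moment_lemma}, which gives $E[(\widehat\mu_t^H)^2] = \nu_t+m_t^2-\gamma_t^H$, a bounded function of $t$ on $[0,T]$. Since this $\pi^*$ maximizes the integrand pointwise and is admissible, it maximizes the expectation, which proves optimality.

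The main obstacle — really the only non-routine point — is the verification that the candidate strategy is admissible, in particular the integrability $E[\int_0^T(\widehat\mu_t^H)^2\,dt]<\infty$; this is exactly what Lemma \ref{filter_moment_lemma} is designed to supply, since the deterministic conditional variance makes $E[(\widehat\mu_t^H)^2]$ explicitly computable and finite. A secondary technical point is justifying the interchange of expectation and the Lebesgue integral and the martingale property of the It\^o integral, but both follow routinely from the admissibility bound $E[\int_0^T\pi_t^2\,dt]<\infty$ and constant $\sigma$. One should also note that the pointwise maximization argument requires comparing $\pi^*$ against an arbitrary admissible $\pi$; since for such $\pi$ the quadratic integrand is dominated pathwise by its value at $\pi^*$, and both sides are integrable, monotonicity of the expectation closes the argument without any further subtlety.
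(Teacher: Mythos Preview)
Your proposal is correct and follows essentially the same route as the paper's own proof: apply It\^o to $\log X_T^\pi$, use admissibility to kill the stochastic integral in expectation, project $\mu_t$ onto $\mathcal{F}_t^H$ via the tower property to replace it by $\widehat\mu_t^H$, maximize the resulting quadratic integrand pointwise, and invoke Lemma~\ref{filter_moment_lemma} to verify that $\pi^*=\sigma^{-2}\widehat\mu^H$ is admissible. The only cosmetic difference is that you spell out a couple of routine technical justifications (positivity of wealth, the integrability needed for Fubini and for the pointwise-comparison step) more explicitly than the paper does.
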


\begin{proof}
From \eqref{wealth_phys} it follows that
\begin{eqnarray}
\nonumber \log X_T^\pi &=& \log x_0 +
\int_0^T\Big(\pi_t\,{\mu}_t- \frac{1}{2} (\sigma
\pi_t)^2 \Big)dt + \int_0^T \pi_t\sigma d{W}_t.
\end{eqnarray}
For $\pi\in \mathcal{A}^H$ we have $E[\int_0^T \pi_t^2 \sigma^2dt]<\infty$, hence the latter integral is a martingale, in particular $E[\int_0^T \pi_t \sigma d{W}_t]=0$. Therefore, using a Fubini argument, the tower property of conditional expectations and that  $\pi_t$ is $\mathcal{F}_t^H$-measurable, we get
\begin{eqnarray}\nonumber
  E[\log X_T^\pi] &=&  \log x_0 + E\left[\int_0^T\left(\pi_t\,{\mu}_t-
\frac{1}{2}( \sigma \pi_t)^2\right)dt \right]\\\nonumber
&=&
  \log x_0 + \int_0^T E\left[ E\left[\pi_t\,{\mu}_t-
\frac{1}{2}( \sigma \pi_t)^2 \,\Big|\, \mathcal{F}^H_t \right] \right]dt \\\nonumber
&=&
  \log x_0 + \int_0^T E\left[\pi_t\,E\left[ {\mu}_t \mid \mathcal{F}^H_t \right] -
\frac{1}{2}( \sigma \pi_t)^2 \right]dt \\
&=&
 \log x_0 + \int_0^T E\left[\pi_t\,\widehat{\mu}_t^H-
\frac{1}{2}( \sigma \pi_t)^2 \right] dt.
\label{hint}
\end{eqnarray}
From \eqref{filter_moment} we have $E[(\widehat{\mu}_t)^2]\le \nu_t+m_t^2  <\infty$,
where $m_t$ and $\nu_t$ are bounded, hence $E\Big[\int_0^T (\widehat{\mu_t^H})^2\, dt \Big]<\infty$ and  the stated strategy $\pi^*$  is indeed admissible. Moreover, for
all $t\in[0,T]$ the quantity $\pi^*_t$ maximizes the integrand in
(\ref{hint}) pointwise, which implies that $\pi^*$ is the maximizer of
$E[\log(X_T^\pi)]$.
\end{proof}

\begin{remark}
\label{interpret_log}  If  the drift $\mu_t$ is observable, then the
optimal strategy  is well-known: at time $t$ one has to invest the
fractions  $\sigma^{-2} \mu_t$ of wealth in the risky stocks.
So for logarithmic utility  the so-called {\em certainty equivalence
principle} holds, i.e.~the optimal strategy under partial
information is obtained by replacing the  unknown drift $\mu_t$ by
the filter estimate $\widehat{\mu}_t^H$ in the formula for the optimal
strategy under full information. Note that this principle is no longer valid for other utility functions, e.g.~for power utility, see Brendle \cite{Brendle (2006)}, Lakner \cite{Lakner (1998)},
Sass and Haussmann \cite{Sass and Haussmann (2004)}).
\end{remark}

Now we can state our main result which provides closed form expressions for the optimal values $V^H(x_0)$ of the considered utility maximization problem under partial information.
\begin{theorem}
\label{opt_value_theorem}
For the optimal value of the optimization problem \eqref{opti_org} it holds
\begin{equation}
\label{opt_value}
V^H(x_0) = \log x_0 + \frac1{2\, \sigma^2}\left(A(m_0,\nu_0) - B^H\right) \quad \text{ for } \quad H\in\{R,E,C,F\},
\end{equation}
where
\[
A(m_0,\nu_0) :=   \int_0^T E[\mu_t^2]\, dt
\quad \text{ and }\quad B^H:=  \int_0^T \gamma_t^H\, dt.
\]
In particular, it holds
\begin{eqnarray}
 \label{opt_value_A}
 A(m_0,\nu_0)&=&
  \Big(\delta^2 + \frac{\beta^2}{2\alpha}\Big)T +
 2\delta(m_0-\delta)\frac{1-e^{-\alpha T}}{\alpha}\\
 \nonumber
 && \hspace*{10mm}
+ \Big((m_0-\delta)^2+\nu_0-\frac{\beta^2}{2\alpha}\Big)\frac{1-e^{-2\alpha T}}{2\alpha}
\end{eqnarray}
and the values of $B^H$ are
\begin{eqnarray}
\nonumber
B^R &=& (C_0-\alpha\sigma^2)T+\sigma^2\log \frac{C_1-C_2\, e^{-2C_0 \sigma^{-2}T}}{2C_0},\\
\nonumber
B^E &=&  \frac{\beta^2}{2\alpha}T- \left(\frac{1-e^{-2\alpha(t_{k+1}-t_k)}}{2\alpha}\right)
 \sum\limits_{k=0}^{N-1} \left(\frac{\beta^2}{2\alpha}-\gamma^E_{t_k}\right),\\
\nonumber
B^C &=&  (C_0-\alpha\sigma^2)T+\sigma^2\sum\limits_{k=0}^{N-1} \log \frac{C_{1k}-C_{2k} e^{-2C_0 \sigma^{-2}(t_{k+1}-t_k)}}{2C_0},\\
\nonumber
 B^F  &=& 0,
\end{eqnarray}
where  $C_0,C_1,C_2$ are given in  \eqref{solution_Riccatti} and  $C_{1k},C_{2k}$ in \eqref{Ck}.
\end{theorem}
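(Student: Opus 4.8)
The plan is to compute the optimal value by substituting the optimal strategy into the already-established expansion of $E[\log X_T^\pi]$, and then to reduce everything to two deterministic integrals. Concretely, I would start from the representation \eqref{hint} derived in the proof of Proposition \ref{lemma:h_opti_log}, plug in $\pi_t^* = \sigma^{-2}\widehat{\mu}_t^H$, and use the pointwise identity $\pi_t^*\widehat{\mu}_t^H - \frac12(\sigma\pi_t^*)^2 = \frac1{2\sigma^2}(\widehat{\mu}_t^H)^2$ to obtain
\[
V^H(x_0) = \log x_0 + \frac1{2\sigma^2}\int_0^T E[(\widehat{\mu}_t^H)^2]\,dt.
\]
Invoking Lemma \ref{filter_moment_lemma} to replace $E[(\widehat{\mu}_t^H)^2]$ by $E[\mu_t^2]-\gamma_t^H$ then splits the integral into $A(m_0,\nu_0)=\int_0^T E[\mu_t^2]\,dt$ and $B^H=\int_0^T\gamma_t^H\,dt$, which is exactly \eqref{opt_value}. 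From here on the argument is purely computational.

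For $A(m_0,\nu_0)$ I would write $E[\mu_t^2]=\nu_t+m_t^2$ with $m_t,\nu_t$ from \eqref{mu_mean} and \eqref{mu_var}, expand $m_t^2$, and collect the integrand into a constant part plus a term proportional to $e^{-\alpha t}$ and a term proportional to $e^{-2\alpha t}$; integrating these three pieces over $[0,T]$ using $\int_0^T e^{-\alpha t}\,dt = (1-e^{-\alpha T})/\alpha$ and $\int_0^T e^{-2\alpha t}\,dt = (1-e^{-2\alpha T})/(2\alpha)$ gives \eqref{opt_value_A} directly. For the $B^H$ terms: $B^F=0$ is immediate since $\gamma_t^F\equiv 0$; for $B^E$ I would integrate the piecewise expression $\gamma_t^E=\frac{\beta^2}{2\alpha}+e^{-2\alpha(t-t_k)}\bigl(\gamma_{t_k}^E-\frac{\beta^2}{2\alpha}\bigr)$ from \eqref{cond_var_E0} over each subinterval $[t_k,t_{k+1})$ — a single exponential integral — and sum over $k$ (with $t_N:=T$), the common factor $\frac{1-e^{-2\alpha(t_{k+1}-t_k)}}{2\alpha}$ factoring out of the sum in the equidistant case.

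The only step needing a small trick is $B^R$ and $B^C$, where one integrates the Riccati solution \eqref{solution_Riccatti} (resp.\ its shifted form \eqref{cond_var_C0}). I would use the decomposition
\[
\frac{C_1+C_2e^{-\lambda t}}{C_1-C_2e^{-\lambda t}} = 1 + \frac{2C_2e^{-\lambda t}}{C_1-C_2e^{-\lambda t}},\qquad \lambda := 2C_0\sigma^{-2},
\]
and observe that the second summand is proportional to the logarithmic derivative of $t\mapsto C_1-C_2e^{-\lambda t}$, hence integrates to a logarithm; evaluating between the endpoints and simplifying the constant via $C_1-C_2=2C_0$ and $2C_0/\lambda=\sigma^2$ yields the stated $B^R$. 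For $B^C$ the identical computation is carried out on each $[t_k,t_{k+1})$ using \eqref{cond_var_C0} with $C_{1k},C_{2k}$ (which again satisfy $C_{1k}-C_{2k}=2C_0$) in place of $C_1,C_2$, and the per-interval contributions are summed. The main obstacle is merely careful bookkeeping of the constants $C_0,C_1,C_2,C_{1k},C_{2k}$ and of the partition endpoints; there is no conceptual difficulty, because Lemma \ref{filter_moment_lemma} has already eliminated the stochastic filter from the problem and turned the value function into an integral of deterministic quantities.
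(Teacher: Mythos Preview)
Your proposal is correct and follows essentially the same route as the paper: substitute $\pi^*=\sigma^{-2}\widehat\mu^H$ into \eqref{hint}, apply Lemma \ref{filter_moment_lemma} to reduce to $A(m_0,\nu_0)-B^H$, and then evaluate the two deterministic integrals using the explicit formulas for $m_t,\nu_t$ and for $\gamma_t^H$. The paper's proof is terser on the integration of $\gamma_t^H$ (it simply refers to \eqref{solution_Riccatti}, Lemma \ref{filter_E}, and Lemma \ref{filter_C}), whereas you spell out the logarithmic-derivative trick for the Riccati solution and correctly note that factoring $\frac{1-e^{-2\alpha(t_{k+1}-t_k)}}{2\alpha}$ out of the sum in $B^E$ requires equidistant information dates; but these are elaborations, not a different argument.
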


\begin{proof} Substituting the optimal strategy $ \pi^*
= \sigma^{-2}\,\widehat{\mu}^H$ given in Proposition \ref{lemma:h_opti_log} into the expression for $E[\log X_T^\pi]$ in \eqref{hint} yields
\begin{eqnarray*}
V^H(x_0) = E[\log X_T^{\pi^*}] &=& \log x_0 + E\Big[\int_0^T\Big(\pi_t^*\,\widehat{\mu}_t^H-
\frac{1}{2}( \sigma \pi_t^*)^2\Big)dt \Big]\\
 &=& \log x_0 + \frac{1}{2\sigma^2}\int_0^T E\big[(\widehat{\mu}_t^H)^2\big] \,dt \\
 &=& \log x_0 +  \frac{1}{2\sigma^2}\int_0^T (E[\mu_t^2] -\gamma_t^H) \,dt
\end{eqnarray*}
where we have used the expression for $E[(\widehat \mu_t^H)^2]$ given in Lemma \ref{filter_moment_lemma}.
Evaluating the integral $\int_0^TE[\mu_t^2]\, dt = \int_0^T(\nu_t+ m_t^2)\, dt $ using the expressions for the mean $m_t$ and variance $\nu_t$ of the drift given in \eqref{mu_mean} and \eqref{mu_var} yields the expression \eqref{opt_value_A} for $A(m_0,\nu_0)$.
Finally, evaluating the integral $\int_0^T\gamma_t^H\, dt$ using the expressions for $\gamma_t^R$ given in \eqref{solution_Riccatti}, for $\gamma_t^E$ and $\gamma_t^C$ given in Lemma \ref{filter_E} and \ref{filter_C} and $\gamma_t^F=0$ yields the formulas for $B^H$ in the theorem.
\end{proof}

\begin{remark}\ \\[-3ex]
\begin{enumerate}
\item
Initializing the filter  with the stationary distribution of the drift, i.e.~$\mu_0\sim \mathcal{N}(\delta,\frac{\beta^2}{2\alpha})$, simplifies
the expression for $A$  in Theorem \ref{opt_value_theorem} yielding
\[ A(m_0,\nu_0)=A\Big(\delta,\frac{\beta^2}{2\alpha}\Big) = \Big(\delta^2 + \frac{\beta^2}{2\alpha}\Big)T. \]
\item
The well-known result for the classical Merton problem with constant drift $\mu_t\equiv   m_0$ can be  obtained as special case by setting $\beta=\nu_0=0$ and $\delta=m_0$. Then the optimal strategy is $\pi^*=m_0/\sigma^{2}$ and the optimal  value is simply
$V(x_0)=\log x_0+\frac{1}{2\sigma^2}m_0^2 \, T$.
\end{enumerate}
\end{remark}
\paragraph{Properties of the value function} We can now use Theorem \ref{opt_value_theorem} and the properties of $\gamma_t^H$ derived in Section \ref{cond_variance} to compare the value functions.

\begin{corollary}
\label{prop_value}\ \\
It holds
\[
\max\{V^E(x_0), V^R(x_0)\} \le V^C(x_0) \le V^F(x_0). \]
\end{corollary}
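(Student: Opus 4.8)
The plan is to read everything off Theorem \ref{opt_value_theorem}. By the representation \eqref{opt_value}, the value function decomposes as $V^H(x_0) = \log x_0 + \frac{1}{2\sigma^2}\bigl(A(m_0,\nu_0) - B^H\bigr)$, where $A(m_0,\nu_0)$ does \emph{not} depend on the information regime $H$ and $B^H = \int_0^T \gamma_t^H\,dt$. Hence for any two regimes $H_1,H_2$ one has $V^{H_1}(x_0)\le V^{H_2}(x_0)$ if and only if $B^{H_1}\ge B^{H_2}$, i.e.\ if and only if $\int_0^T \gamma_t^{H_1}\,dt \ge \int_0^T \gamma_t^{H_2}\,dt$. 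So the entire statement reduces to comparing integrated conditional variances, and the necessary comparisons have already been established in Section \ref{cond_variance}.

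For the upper bound $V^C(x_0)\le V^F(x_0)$ I would use $\gamma_t^F=0$ together with the nonnegativity of $\gamma_t^C$ (guaranteed by the construction in Lemma \ref{filter_C} and also recorded in Proposition \ref{prop_cond_var}); integrating gives $B^C\ge 0 = B^F$, hence the claim. For the two lower bounds, Proposition \ref{prop_cond_var} provides the pointwise inequalities $\gamma_t^C\le \gamma_t^E$ and $\gamma_t^C\le \gamma_t^R$ for all $t\in[0,T]$; integrating over $[0,T]$ and using monotonicity of the integral yields $B^C\le B^E$ and $B^C\le B^R$, which translate into $V^E(x_0)\le V^C(x_0)$ and $V^R(x_0)\le V^C(x_0)$, so that $\max\{V^E(x_0),V^R(x_0)\}\le V^C(x_0)$. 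Chaining the three inequalities finishes the proof.

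There is no genuine obstacle here: the corollary is in the literal sense a corollary of Theorem \ref{opt_value_theorem} and Proposition \ref{prop_cond_var}, the real work having been done there. As a consistency check one may also give the purely structural argument: since $\mathcal{F}_t^C = \mathcal{F}_t^R\vee\mathcal{F}_t^E$ and $\mathcal{F}_t^C\subseteq\mathcal{G}_t$, the admissible classes are nested, $\mathcal{A}^R\cup\mathcal{A}^E\subseteq\mathcal{A}^C\subseteq\mathcal{A}^F$, so the inequality chain already follows from taking suprema over larger and larger sets in \eqref{opti_org}. The advantage of the formula-based proof is that it additionally exhibits the sizes of the gaps through the differences of the $B^H$, which is what is exploited in the numerical comparisons of Section \ref{numerics}.
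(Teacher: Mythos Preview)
Your proof is correct and follows essentially the same route as the paper: use the representation from Theorem \ref{opt_value_theorem} to reduce the comparison of $V^H$ to the comparison of $B^H=\int_0^T\gamma_t^H\,dt$, and then apply the pointwise ordering $\min\{\gamma_t^E,\gamma_t^R\}\ge\gamma_t^C\ge\gamma_t^F=0$ from Proposition \ref{prop_cond_var}. The additional structural remark via nested admissible classes is a nice complement but goes beyond what the paper's own proof uses.
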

\begin{proof} From Theorem \ref{opt_value_theorem} we have the representation
\[V(x_0) = \log x_0 + \frac{1}{2\sigma^2}\,A(m_0,\nu_0) -\frac{1}{2\sigma^2} \int_0^T  \gamma_t\, dt.\]
The inequality $\min\{ \gamma_t^E,\gamma_t^R\} \ge  \gamma_t^C \ge \gamma_t^F=0$ given in Proposition \ref{prop_cond_var} which holds  for all $t\in[0,T]$ yields the  assertion.

\end{proof}

\begin{corollary}
\label{prop_value_asymN}\ \textbf{Asymptotics for $N\to \infty$}\\
Under the assumptions of Proposition \ref{prop_cond_var_asymN} and denoting the value functions corresponding to $N$ expert opinions as specified in that proposition by $V^{E,N}(x_0)$ and $V^{C,N}(x_0)$,
we get
\[\lim\limits_{N\to\infty} V^{E,N}(x_0) = \lim\limits_{N\to\infty} V^{C,N}(x_0) =V^{F}(x_0).\]
\end{corollary}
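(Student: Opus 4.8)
The plan is to read off everything from Theorem \ref{opt_value_theorem} and reduce the statement to the convergence of the conditional variances already obtained in Proposition \ref{prop_cond_var_asymN}. For $H\in\{E,C\}$ Theorem \ref{opt_value_theorem} gives $V^{H,N}(x_0) = \log x_0 + \frac{1}{2\sigma^2}\big(A(m_0,\nu_0) - B^{H,N}\big)$ with $B^{H,N} = \int_0^T \gamma_t^{H,N}\, dt$, while $V^F(x_0) = \log x_0 + \frac{1}{2\sigma^2} A(m_0,\nu_0)$ because $B^F=0$. Hence $V^{H,N}(x_0) - V^F(x_0) = -\frac{1}{2\sigma^2}\int_0^T \gamma_t^{H,N}\, dt$, and the claim is equivalent to showing that $\int_0^T \gamma_t^{H,N}\, dt \to 0$ as $N\to\infty$ (with $\Delta_N\to0$, as in Proposition \ref{prop_cond_var_asymN}).

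First I would invoke dominated convergence on $[0,T]$. Proposition \ref{prop_cond_var_asymN} already supplies the pointwise limit $\gamma_t^{H,N}\to 0$ for every $t\in(0,T]$, and $\{0\}$ is Lebesgue-null so it is irrelevant for the integral. What remains is an $N$-independent integrable dominating function. This is provided by a crude a priori bound: from the explicit dynamics in Lemma \ref{filter_E}, between information dates $\gamma_t^{E,N} = e^{-2\alpha(t-t_k)}\gamma_{t_k}^{E,N} + \big(1-e^{-2\alpha(t-t_k)}\big)\frac{\beta^2}{2\alpha}$ is a convex combination of $\gamma_{t_k}^{E,N}$ and $\frac{\beta^2}{2\alpha}$, and at the update times $\gamma_{t_k}^{E,N} = \lambda_k^{E,N}\gamma_{t_k-}^{E,N}\le \gamma_{t_k-}^{E,N}$; by induction $\gamma_t^{E,N}\le M := \max\{\nu_0,\tfrac{\beta^2}{2\alpha}\}$ for all $t\in[0,T]$ and all $N$. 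By Proposition \ref{prop_cond_var} we also have $0\le \gamma_t^{C,N}\le \gamma_t^{E,N}\le M$. Since the constant $M$ is trivially integrable on $[0,T]$, dominated convergence yields $\int_0^T \gamma_t^{H,N}\, dt \to 0$ for $H\in\{E,C\}$, and plugging back gives $\lim_{N\to\infty} V^{H,N}(x_0) = V^F(x_0)$.

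There is essentially no genuine obstacle here: the only point requiring care is the uniform-in-$N$ bound needed to pass the limit inside the integral, and that follows directly from the explicit filter formulas. In fact, for $H=C$ one can avoid dominated convergence altogether, since $0\le B^{C,N}\le B^{E,N}$ by Proposition \ref{prop_cond_var}, so $B^{C,N}\to0$ follows from $B^{E,N}\to0$ by squeezing; everything else is bookkeeping built on Theorem \ref{opt_value_theorem} and Proposition \ref{prop_cond_var_asymN}.
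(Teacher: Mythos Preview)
Your proof is correct and follows essentially the same approach as the paper: reduce via Theorem \ref{opt_value_theorem} to showing $\int_0^T\gamma_t^{H,N}\,dt\to 0$, and then invoke the pointwise convergence of Proposition \ref{prop_cond_var_asymN}. The paper's own proof is a one-liner (``analogous to Corollary \ref{prop_value} using Proposition \ref{prop_cond_var_asymN}'') and leaves the passage of the limit under the integral implicit; you have made this step explicit by supplying the uniform bound $M=\max\{\nu_0,\beta^2/(2\alpha)\}$ and dominated convergence, which is the right justification.
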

\begin{proof} The proof is analogous to the proof of Corollary \ref{prop_value} and uses the asymptotic properties of $\gamma_t^{E,N}$ and $\gamma_t^{C,N}$ given in Proposition \ref{prop_cond_var_asymN}.
\end{proof}

\section{Numerical Results}
\label{numerics}
In this section we illustrate the findings of the previous sections. The numerical
experiments are based on a financial market model where the drift $\mu$ follows an Ornstein-Uhlenbeck process as given in \eqref{drift} and \eqref{mu_explicit}, volatility $\sigma$ is constant and the interest
rate  equals zero.
For simulated drift process, stock prices and expert opinions we consider the maximization of
expected logarithmic utility of terminal  wealth.
We assume that $N$   expert opinions  with normally distributed views $Z_k$ arrive at equidistant information dates $t_k=kT/N$ and that their variances are constant, i.e.~$\varexp_k=\varexp>0,~k=0,\ldots,N-1,~N\in\N$. If not stated otherwise we use the model parametes given in Table \ref{parameter} in the subsequent simulations.

\begin{table}[h]
\begin{tabular}{|ll|r||lll|r|}
\hline
\rule{0mm}{2.5ex}%
Investment horizon & $T$ & $1$ year          & Drift: & Mean & $\delta$ & $0.05$
\\\hline \rule{0mm}{2.5ex}%
Stock volatility & $\sigma$ & $0.25$         &       & Mean reversion speed & $\alpha$ & $3$
\\\hline \rule{0mm}{2.5ex}%
Expert's variance &$\Gamma$ &$0.5^2$&       & Volatility & $\beta$ & $1$\\
\hline
\end{tabular}
\\[1ex]
 \centering \caption{\label{parameter}
 \small Model paramters
 }
\end{table}

\paragraph{Filter}
Figure \ref{drift} shows the filter $\widehat{\mu}^H$ and the conditional variance $\gamma^H$ for $H=R,E,C$. Note that for $H=F$ (full information) the filter $\widehat{\mu}^F$ coincides with the drift process $\mu$ while $\gamma^F=0$.
The upper panel shows the  simulated path of the return process $R$ from which the filter $\widehat{\mu}^R$ is computed. The plot also shows the path of $\int_0^t \mu_s\,ds$ which would be the return process for $\sigma=0$.
In the second  panel the initial value $\mu_0$ is assumed to be unknown and its distribution is the stationary distribution. Here the initial value of the filters  is $\widehat{\mu}_{0-}^H=E[\mu_0]=\delta$. The conditional variance, shown in the lower left panel,  starts with the stationary variance, i.e.~$\gamma_{0-}^H=\nu_0=\beta^2/(2\alpha)$. Note that for $H=E,C$ the first update is at time $t=0$ and we have $\widehat{\mu}_0^E=\widehat{\mu}_0^C\neq \widehat{\mu}_0^R$ and $\gamma_0^E=\gamma_0^C< \gamma_0^R$.  In the third panel  we assume  a known (deterministic) initial value $\mu_0=m_0$. So at time $t=0$  also for $H=E,R,C$  we have full information on the drift and it holds $\nu_0=\gamma_0^H=0$ and $\widehat{\mu}^H_0=m_0$.

For emphasizing the effect of the filter updates  due to the expert opinions  we have chosen $\Gamma=0.2^2$, which is  quite small and corresponds to a very reliable expert. In the second and third panel the expert views $Z_k$ are marked by red crosses. It can be seen that for $H=E,C$  the conditional variance $\gamma_t^H$ jumps down in the information dates and the filter $\widehat{\mu}^H_t$ is quite close to the actual value of the drift $\mu_t$. Between the information dates $\gamma^H$ increases and the filter $\widehat{\mu}^H$ is driven back to its mean  $\delta$.
While the conditional variances $\gamma^H$ start for unknown resp. known initial value with different initial values the lower panel shows that
$\gamma^R$ quickly reaches the asymptotic value  $\gamma^R_\infty=C_0-\alpha\sigma^2$ and also  the asymptotic upper and lower bounds for $\gamma^E$ and $\gamma^C$ from Proposition \ref{prop_cond_var_asym_t} apply even for small times.

\begin{figure}[p]

\ifpdf \hspace*{-10mm}
\includegraphics[width=140mm,height=110mm]{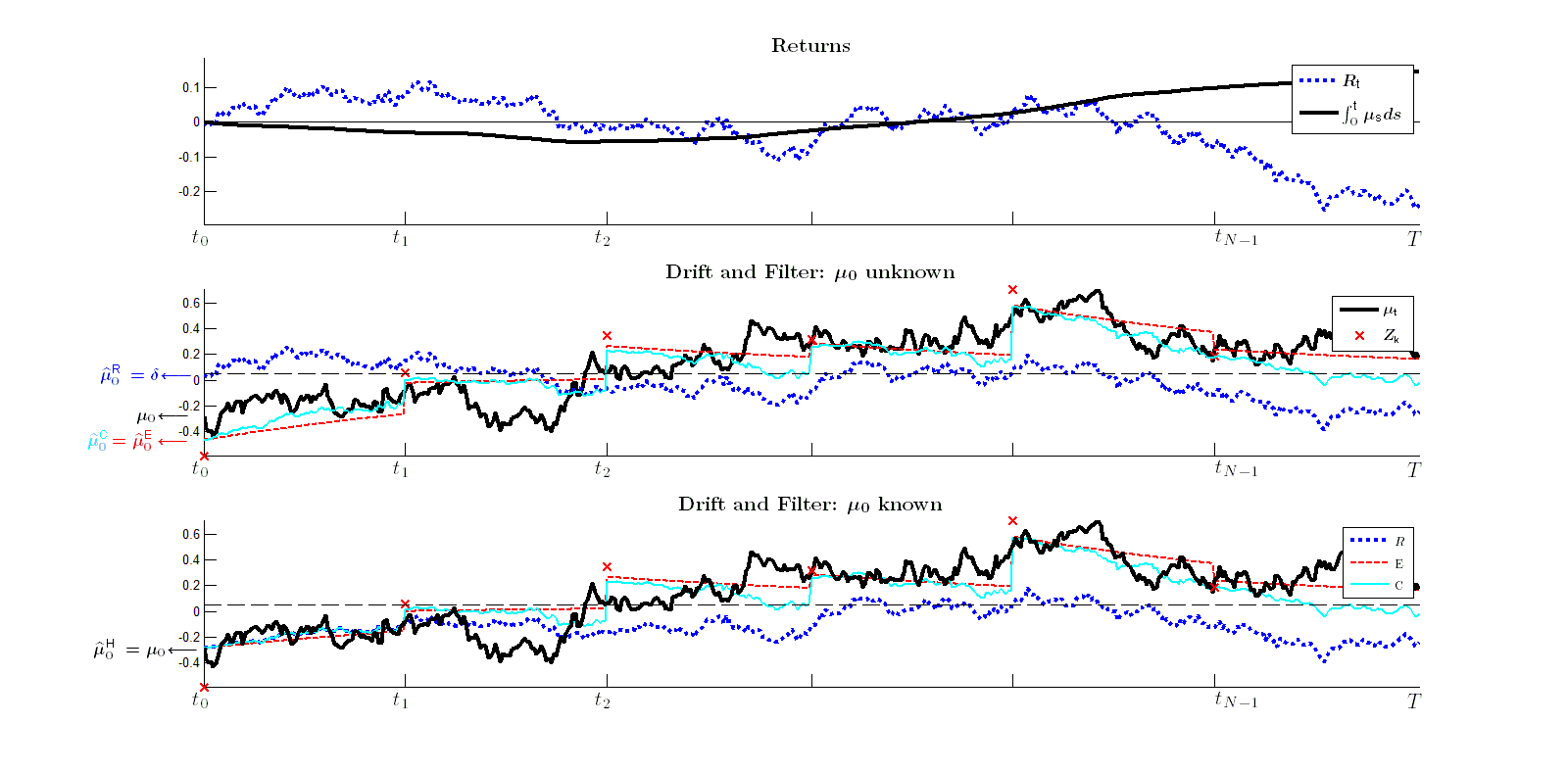}%

\hspace*{-5mm}
\includegraphics[width=120mm,height=40mm]{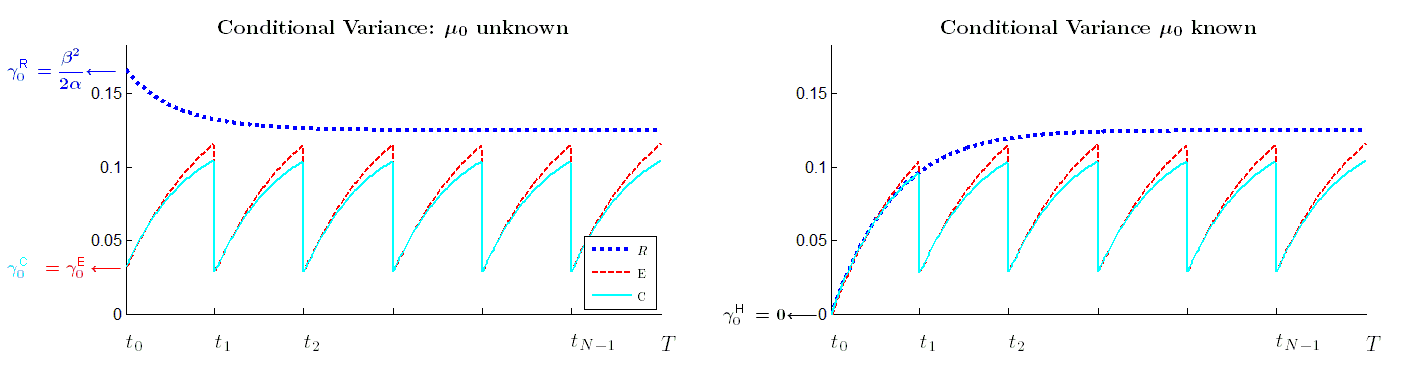}%
\vspace*{-2ex}
 \else \hspace*{-2mm}
\includegraphics[width=120mm,height=110mm]{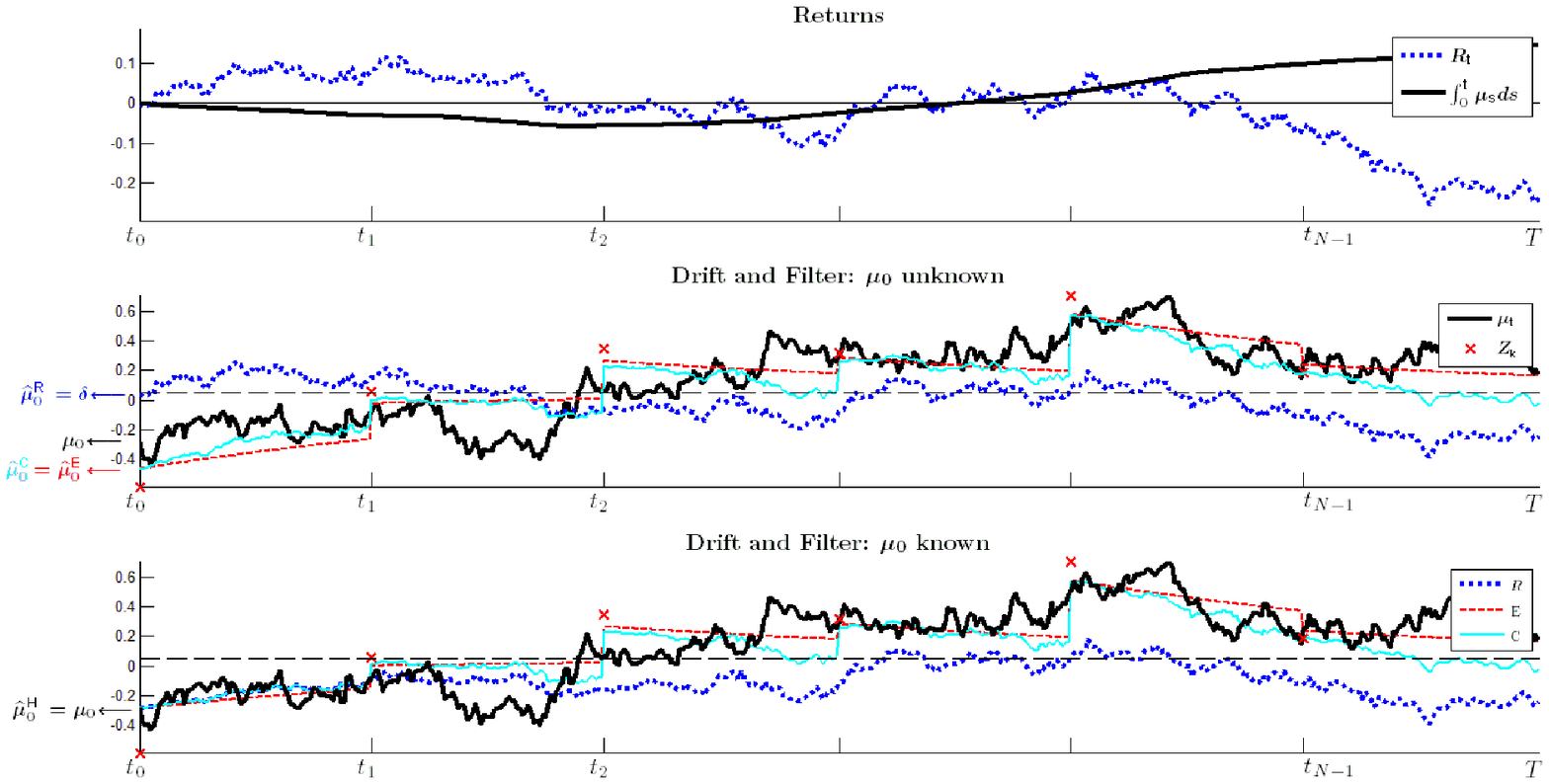}%

\hspace*{-5mm}
\includegraphics[width=120mm,height=40mm]{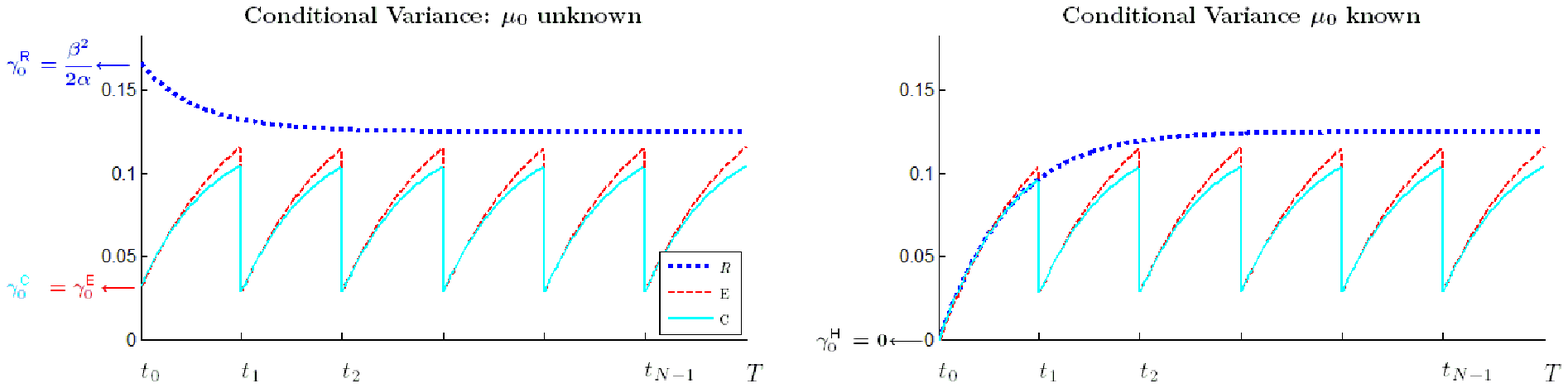}%
\fi

 \centering \caption{\label{figdrift1}
 \small
 Return $R$, Drift $\mu$, filters $\widehat{\mu}^H$ and conditional variances~$\gamma^H$
 \newline
 1.~panel: return $R$ and $\int_0^t \mu_s\,ds$ (return for $\sigma=0$)\newline
 2.~panel: drift $\mu$ and filters $\widehat{\mu}^H$ for unknown initial value
 $\mu_0\sim \mathcal{N}(\delta, \frac{\beta^2}{2\alpha})$
 \newline
 3.~panel: drift $\mu$ and filters $\widehat{\mu}^H$ for known initial value $\mu_0=m_0, \nu_0=0 $
 \newline
 4.~panel: conditional variances $\gamma^H$ for unknown/known (left/right)~$\mu_0$\newline
 Parameters as in Table \ref{parameter},  $\Gamma=0.2^2, N=6$
 }
\end{figure}

\paragraph{Efficiency}
In order to quantify the monetary value
of information contained in the  observations of stock returns and expert opinions and  in particular the value of the additional information due to the expert opinions, we compare four investors maximizing their expected log-utility from terminal wealth. First, the ``fully informed'' or $F$--investor can  observe the drift. Second,
the $E$--investor   has only access to expert opinions while the $R$--investor only observes stock returns. Finally the $C$--investor has access to (the combination of) stock returns and expert opinions.  Now we consider for $H=E,R,C$
the initial capital $x_0^H$  which the $H$--investor needs to obtain the same maximized
expected log-utility at time $T$ as the fully informed investor who started at time $0$ with unit
wealth $x_0^F=1$.
The difference  $x_0^H-x_0^F$ can be interpreted as loss of information
for the (non fully informed) $H$-investor while the ratio
\[\varrho^H = \frac{x_0^F}{x_0^H} = \frac{1}{x_0^H}\]
is a measure for the efficiency of the $H$-investor.

The initial capital required
by the $H$--investor $x_0^H$  is obtained as solution of the equation $V^H(x_0^H)=V^F(1)$.
From Theorem  \ref{opt_value_theorem} it follows $V^H(x_0^H) =\log x_0^H + \frac{1}{2\sigma^2}(A(m_0,\nu_0) - B^H)$ and $V^F(1)=  \frac{1}{2\sigma^2} A(m_0,\nu_0)$ since $B^F=0$, hence
\[x_0^H = \exp\Big(\frac{B^H}{2\sigma^2}\Big) \quad\text{and}\quad \varrho^H=\exp\Big(-\frac{B^H}{2\sigma^2}\Big).\]

\begin{figure}[h]
\ifpdf \hspace*{-10mm}
\includegraphics[width=140mm,height=70mm]{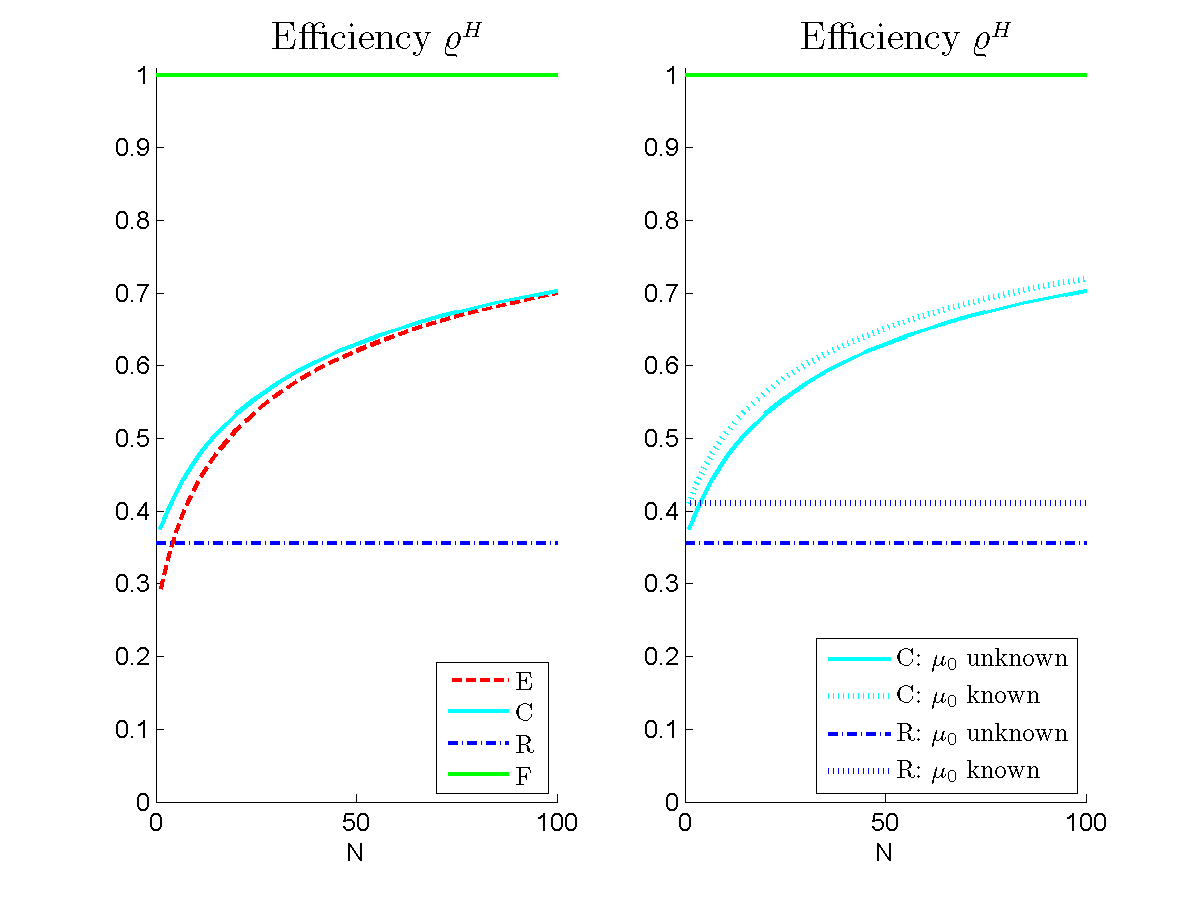}%
\vspace*{-2ex}
 \else \hspace*{-2mm}
\includegraphics[width=120mm,height=70mm]{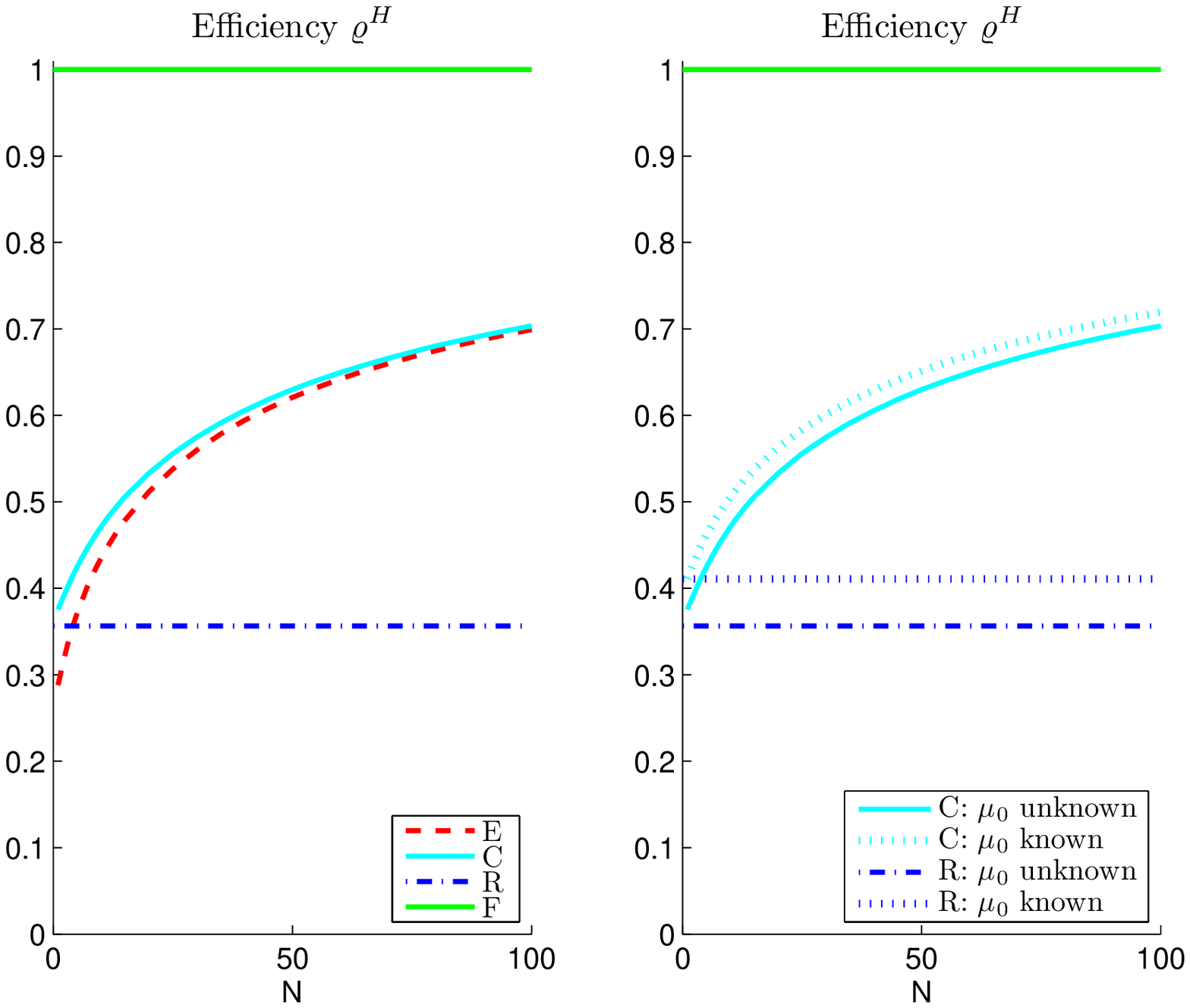}%
\vspace*{-2ex}
\fi

 \centering \caption{\label{efficieny1}
 \small
 Efficiency $\varrho^H$ as a function of the number  $N$
 \newline
  left: \  unknown initial value   ${\mu}_0\sim\mathcal{N}(\delta, \frac{\beta^2}{2\alpha})$ (stationary distribution)\newline
right:  unknown \hspace*{9mm}vs.~known initial value, i.e. \newline
\hspace*{10mm} ${\mu}_0\sim\mathcal{N}(\delta, \frac{\beta^2}{2\alpha})$ vs. ${\mu}_0=\delta=0.05$
 }
\end{figure}

Figure \ref{efficieny1} shows the efficiency $\varrho^H$ as a function of the
number $N$ of information dates.  The left panel shows the results for the different investors assuming the initial value $\mu_0$ is unknown and its distribution is the stationary distribution.
 If an investor starts instead with a known initial value for the drift, i.e.~$\nu_0=\gamma_0^H=0$, then this additional information  increases the value $V^H$ and as a consequence the efficiency $\varrho^H$. This effect is shown for $H=C,R$ in the right panel. For the sake of better comparison we have set the known initial value $m_0$ equal to the mean of the drift $\delta$. So at time $t=0$ the filters are initialized with $\widehat{\mu}^H_{0-}=\delta$, $H=E,R,C$, for known as well as unknown $\mu_0$.

Obviously, we have $\varrho^F=1$ and  the efficiency $\varrho^R$ of the investor observing only returns and no expert opinions is not affected by $N$ while $\varrho^E$ and $\varrho^C$  increase with
$N$. As a consequence of Corollary \ref{prop_value} we always have $\varrho^E\le \varrho^C$ and $\varrho^R\le \varrho^C$, since the investment decisions of the $C$-investor are based on the observation of  returns as well as expert opinions while the $E$-- and $R$--investor have access to only one of these sources of information.

In order to illustrate the asymptotic results for $N\to\infty$ given in Corollary \ref{prop_value_asymN} Table \ref{asymptotics_N} gives for increasing numbers $N$ of information dates the values  $V^{H,N}(1)$ and efficiencies $\varrho^{H,N}$ for $H=E,C$ and compares with the values $V^{H}(1)$ and $\varrho^{H}$ for $H=R,F$. It can be observed  that the result for the fully informed investor ($H=F$) is obtained
for $N\to \infty$, i.e.~$\varrho^{E,N}\to 1$ and $\varrho^{C,N}\to 1$.
For this study we assume that the initial value $\mu_0$ is unknown and its distribution is the stationary distribution. Then according to Theorem  \ref{opt_value_theorem} the value is  $V^{H}(1)= \frac{1}{2\sigma^2}((\delta^2 + {\beta^2}/(2\alpha))T - B^H)$. For the interpretation of the values of $N$ given in Table \ref{asymptotics_N} we note, that for $T=1$ year expert opinions arriving every month, week, day, hour, minute or second  corresponds to $N=12, 52, 365, 8.760, 525.600$ or $31.536.000$, respectively.

\begin{table}
\[
\begin{array}{|r|r|r||r|r|}
\hline
\rule{0mm}{2.5ex}
& \multicolumn{2}{c||}{V^H(1)}&
  \multicolumn{2}{c|}{\varrho^H} \\ \hline
\rule{0mm}{2.5ex}
R &\multicolumn{2}{c||}{0.3213}&
   \multicolumn{2}{c|}{35.63}\\ \hline
\rule{0mm}{2.5ex}
 N           & \hspace*{10mm} E~~~~\  & \hspace*{10mm} C~~~~\  & \hspace*{7mm} E \hspace*{2mm} & \hspace*{7mm} C \hspace*{2mm} \\\hline
 10          & 0.5208  &  0.6008 & 43.49 & 47.12\\
 100         & 0.9957  &  1.0017 &69.94  &  70.36\\
 1.000       & 1.2297  &  1.2299 &88.37  & 88.39\\
 10.000      & 1.3134  &  1.3134 &96.09  & 96.09\\
 100.000     & 1.3407  &  1.3407 &98.74  & 98.74\\
 1.000.000   & 1.3493  &  1.3493 &99.60  & 99.60\\
 10.000.000  & 1.3521  &  1.3521 &99.87  &  99.87\\
   \hline
\rule{0mm}{2.5ex}
 F  &\multicolumn{2}{c||}{1.3533}&\multicolumn{2}{c|}{100.00}\\
\hline
\end{array}
\]

\vspace*{1ex}
 \centering \caption{\label{asymptotics_N}
 \small Value $V^H(1)$ and efficiency $\varrho^H$ in \% for various numbers~ $N$ 
 }
\end{table}
Finally, we study the dependence of the efficiency on the reliability of the expert opinions which is measured by the  standard deviation $\sqrt{\Gamma}$ of the views.
  The left panel of Figure \ref{efficieny2} shows the results for an unknown  initial value $\mu_0$ while the right panel compares  the efficiencies  $\varrho^C$ and $\varrho^R$ for known and unknown initial value.
As in Figure \ref{efficieny1}, the efficiency $\varrho^R$  is not affected by the expert opinions and does not depend on $\sqrt{\Gamma}$  while $\varrho^E$ and $\varrho^C$  decrease with
$\sqrt{\Gamma}$, i.e.~with decreasing reliability of the expert opinions. As before we have $\varrho^E\le \varrho^C$.  The figure also
indicates that for $\Gamma\to \infty $ the efficiency $\varrho^C$ tends to the efficiency of the  $R$-investor since the expert opinions carry no additional information about the drift.

\begin{figure}[h]
\ifpdf \hspace*{-10mm}
\includegraphics[width=140mm,height=70mm]{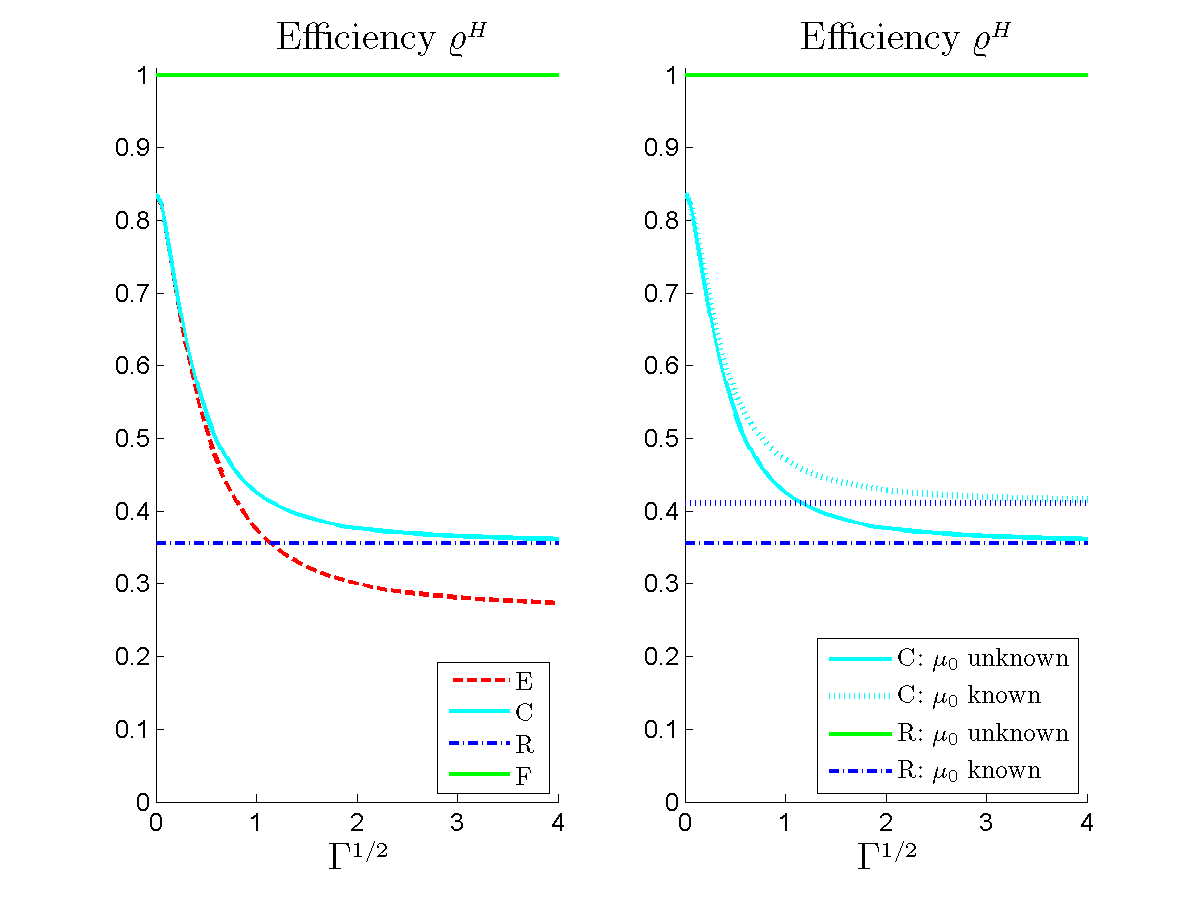}%
\vspace*{-2ex}
 \else \hspace*{-2mm}
\includegraphics[width=120mm,height=70mm]{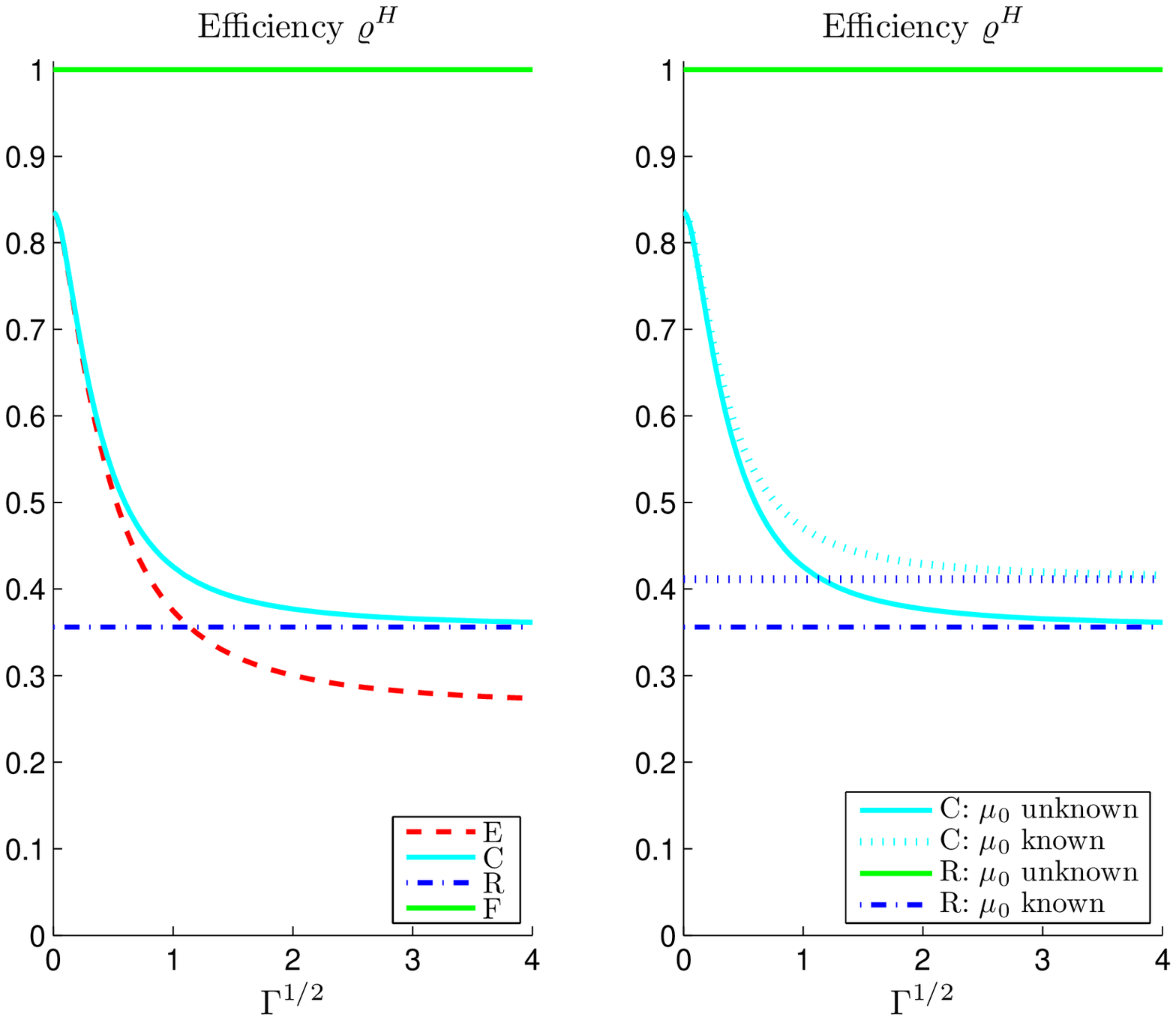}%
\fi

 \centering \caption{\label{efficieny2}
 \small
 Efficiency $\varrho^H$ as a function of $\sqrt{\Gamma}$ (experts reliability) 
  left: \  unknown initial value   ${\mu}_0\sim\mathcal{N}(\delta, \frac{\beta^2}{2\alpha})$ (stationary distribution)\newline
right:  unknown \hspace*{9mm}vs.~known initial value, i.e. \newline
\hspace*{10mm} ${\mu}_0\sim\mathcal{N}(\delta, \frac{\beta^2}{2\alpha})$ vs. ${\mu}_0=\delta=0.05$
\newline
Parameters as in Table \ref{parameter}, $N=20$
 }
\end{figure}

\begin{small}
\bibliographystyle{amsplain}

\end{small}

\end{document}